\newtheorem{proposition}{Proposition\setcounter{claimcounter}{0}}
\newtheorem{lemma}{Lemma\setcounter{claimcounter}{0}}
\newtheorem{theorem}{Theorem\setcounter{claimcounter}{0}}
\newtheorem{corollary}{Corollary}
\newcounter{claimcounter}
\newtheorem*{claim*}{{\it Claim}}
\newenvironment{subproof}{\begin{proof}[Proof of Claim]}{\end{proof}}
\newcommand{\rem}{\mathrm{rem}}
\newcommand{\cov}{\mathrm{cr}}
\newcommand{\fix}{\mathrm{fix}}
\newcommand{\linear}{\mathrm{lin}}
\newcommand{\affine}{\mathrm{aff}}
\newcommand{\G}{\mathrm{G}}
\newcommand{\GF}{\mathrm{GF}}
\newcommand{\inn}{\mathrm{inn}}
\newcommand{\ind}{\mathrm{ind}}
\newcommand{\Ima}{\mathrm{Im}}
\newcommand{\dH}{d_\mathrm{H}}
\newcommand{\wH}{w_\mathrm{H}}
\newcommand{\VH}{V_\mathrm{H}}
\title{Finite dynamical systems, hat games, and coding theory}
\author{Maximilien Gadouleau\footnote{School of Engineering and Computing Sciences, Durham University, Durham, UK. Email: \texttt{m.r.gadouleau@durham.ac.uk}\newline Work supported by EPSRC grant EP/K033956/1} }
\begin{document}

\maketitle

\begin{abstract}
The dynamical properties of finite dynamical systems (FDSs) have been investigated in the context of coding theoretic problems, such as network coding and index coding, and in the context of hat games, such as the guessing game and Winkler's hat game. In this paper, we relate the problems mentioned above to properties of FDSs, including the number of fixed points, their stability, and their instability. We first introduce the guessing dimension and the coset dimension of an FDS and their counterparts for digraphs. Based on the coset dimension, we then strengthen the existing equivalences between network coding and index coding. We also introduce the instability of FDSs and we study the stability and the instability of digraphs. We prove that the instability always reaches the size of a minimum feedback vertex set. We also obtain some non-stable bounds independent of the number of vertices of the graph. We then relate the stability and the instability to the guessing number. We also exhibit a class of sparse graphs with large girth that have high stability and high instability; our approach is code-theoretic and uses the guessing dimension. Finally, we prove that the affine instability is always asymptotically greater than or equal to the linear guessing number.
\end{abstract}

\section{Introduction} \label{sec:intro}

\textbf{Finite dynamical systems} (FDSs) have been used to represent a network of interacting entities as follows. A network of $n$ entities has a state $x= (x_1,\dots, x_n) \in A^n$, represented by a variable $x_v$ taking its value in a finite alphabet $A$ on each entity $v$. The state then evolves according to a deterministic function $f = (f_1,\dots,f_n) : A^n \to A^n$, where $f_v : A^n \to A$ represents the update of the local state $x_v$. Although different update schedules have been studied, we are focusing on the parallel update schedule, where all entities update their state at the same time, and $x$ becomes $f(x)$. FDSs have been used to model gene networks \cite{Kau69, Tho73, TK01, Jon02, KS08}, neural networks \cite{MP43, Hop82, Gol85}, social interactions \cite{PS83, GT83} and more (see \cite{TD90, GM90}).

The structure of an FDS $f: A^n \to A^n$ can be represented via its \textbf{interaction graph} $G(f)$, which indicates which update functions depend on which variables. More formally, $G(f)$ has $\{1,\dots,n\}$ as vertex set and there is an arc from $u$ to $v$ if $f_v(x)$ depends on $x_u$. In different contexts, the interaction graph is known--or at least well approximated--, while the actual update functions are not. One main problem of research on FDSs is then to predict their dynamics according to their interaction graphs.\\
~\\
Network coding is a technique to transmit information through networks, which can significantly improve upon routing in theory \cite{ACLY00, YLCZ06}. At each intermediate node $v$, the received messages $x_{u_1}, \ldots, x_{u_k}$ are combined, and the combined message $f_v(x_{u_1},\ldots,x_{u_k})$ is then forwarded towards its destinations. The main problem is to determine which functions $f_v$ can transmit the most information. In particular, the \textbf{network coding solvability problem} tries to determine whether a certain network situation, with a given set of sources, destinations, and messages, is solvable, i.e. whether all messages can be transmitted to their destinations. This problem being very difficult, different techniques have been used to tackle it, including matroids \cite{DFZ06}, Shannon and non-Shannon inequalities for the entropy function \cite{DFZ07, Rii07}, error-correcting codes \cite{GR11}, and closure operators \cite{Gad13, Gad14}. 

The network coding solvability problem can be recast in terms of fixed points of FDSs as follows \cite{Rii07, Rii07a}. The so-called \textbf{guessing number} \cite{Rii07} of a digraph $D$ is the logarithm of the maximum number of fixed points over all FDSs $f$ whose interaction graph is a subgraph of $D$: $G(f) \subseteq D$. The guessing number is always upper bounded by the size of a minimum feedback vertex set of $D$; if equality holds, we say that $D$ is \textbf{solvable} and the FDS $f$ reaching this bound is called a solution. Then, a network coding instance $N$ is solvable if and only if some digraph $D_N$ closely related to the instance $N$ is solvable.

\textbf{Linear network coding} is the most popular kind of network coding, where the intermediate nodes can only perform linear combinations of the packets they receive \cite{LYC03}. The network coding instance $N$ is then linearly solvable if and only if $D_N$ admits a linear solution. Many interesting classes of linearly solvable digraphs have been given in the literature (see \cite{Rii06, GR11}). On the other hand, graphs which are not linearly solvable have been exhibited in \cite{Rii04, Rii07b, GRF15}. 

\textbf{Index coding} is a new way to broadcast information to different receivers who have different partial information. The problem of index coding is to find the smallest index code, i.e. the minimum amount of information to transmit to all destinations so that they all can gather the same amount of information. Network coding solvability is closely related to index coding \cite{ESG10, BBJK11}. In particular, the length of a minimal index code (for a given digraph) is the same as the \textbf{information defect} \cite{Rii07, GR11}. Since a graph is solvable if and only if it is solvable in the sense of information defect \cite{Rii07, Gad13}, there is an equivalence between index coding and network coding. This equivalence was independently given in \cite{EEL13} and extended to storage capacity in \cite{Maz14}. In fact, there are two additional equivalences: an asymptotic version is given in \cite{GR11}, and a version in the linear case can be found in \cite{GR11}.\\
~\\
\textbf{Hat games} are an increasingly popular topic in combinatorics. Typically, a hat game involves $n$ players, each wearing a hat that can take a colour from a given set of $q$ colours. No player can see their own hat, but each player can see some subset of the other hats. All players are asked to guess the colour of their own hat at the same time. For an extensive review of different hat games, see \cite{Krz12}. Different variations have been proposed: for instance, the players can be allowed to pass \cite{Ebe98}, or the players can guess their respective hat's colour sequentially \cite{Krz10}. The variation in \cite{Ebe98} mentioned above has been investigated further (see \cite{Krz12}) for it is strongly connected to coding theory via the concept of covering codes \cite{CHLL97}; in particular, some optimal solutions for that variation involve the well-known Hamming codes \cite{EMV03}. 

In the variation called the ``\textbf{guessing game},'' players are not allowed to pass, and must guess simultaneously \cite{Rii07}. The team wins if everyone has guessed their colour correctly; the aim is to maximise the number of hat assignments which are correctly guessed by all players. This version of the hat game then aims to determine the guessing number of a digraph. 

In \textbf{Winkler's hat game}, the players are not allowed to pass, and the team scores as many points as players guessing correctly. The aim is then to construct a guessing function $f$ which guarantees a score for any possible configuration of hats \cite{Win01}. The relation between Winkler's hat game and auctions has been revealed in \cite{AFGHIS11} and developed in \cite{BNW15}.\\
~\\
In this paper, we relate the problems of network coding, index coding, the guessing game, and Winkler's hat game, to the (in)stability of finite dynamical systems. We first review the relevant background in Section \ref{sec:background}. In Section \ref{sec:guessing_code}, we introduce the guessing dimension and the coset dimension of an FDS and their counterparts for digraphs. Based on the coset dimension, we then strengthen the known equivalences between network coding and index coding in Theorem \ref{th:g+l=n}. In Section \ref{sec:instability}, we introduce the instability of FDSs and we study the stability and the instability of digraphs. We prove that the instability always reaches the size of a minimum feedback vertex set in Theorem \ref{th:i=tau}. We also give obtain some non-stable bounds based only on the size of a minimum feedback vertex set (and independent on the number of vertices of the graph) in Theorem \ref{th:stability_upper_bound}. We then relate the stability and the instability to the guessing number in Theorem \ref{th:i_v_g}. Finally, we obtain  results for linear and affine FDSs in Section \ref{sec:linear}. In Theorem \ref{th:high_i}, we exhibit a class of sparse graphs with high girth and high affine (in)stability; our approach is code-theoretic and uses the guessing dimension. We also prove that the affine instability is always asymptotically greater than or equal to the linear guessing number in Theorem \ref{th:i_affine}.

\section{Background} \label{sec:background}

\subsection{Notation for digraphs and finite dynamical systems} \label{sec:notation}

Let $n$ be a positive integer, $V = \{1, \dots, n\}$, and $D$ be a loopless digraph on $V$, i.e. $D = (V,E)$ with $E \subseteq V^2 \setminus\{(v,v) : v \in V\}$. Paths and cycles are always directed. The \textbf{girth} of $D$ is the minimum length of a cycle in $D$ The maximum number of vertex disjoint cycles in $D$ is denoted $\nu(D)$. A \textbf{feedback vertex set} is a set of vertices $I$ such that $D-I$ has no cycles. The minimum size of a feedback vertex set is denoted $\tau(D)$. If $J \subseteq V$, then $D[J]$ is the subgraph of $D$ induced by $J$. If this graph is acyclic, then we can sort $J$ in \textbf{topological order}: $J = \{j_1, \dots, j_k\}$ where $(j_a, j_b) \in D$ only if $a < b$. The in-neighbourhood of a vertex $v$ in $D$ is denoted by $\inn(v)$, its in-degree is denoted by $\ind(v)$. We denote the maximum in-degree of $D$ as $\Delta_{\mathrm{in}}(D)$; similarly, we denote the maximum out-degree as $\Delta_{\mathrm{out}}(D)$. If $D$ is undirected, then $\Delta_{\mathrm{in}}(D) = \Delta_{\mathrm{out}}(D) = \Delta(D)$.

Let $q \ge 2$, we denote $[q]=\{0,1,\dots,q-1\}$. For all $x = (x_1, \dots, x_n) \in [q]^n$, we use the following shorthand notation for all $J = \{j_1, \dots, j_k\} \subseteq V$: $x_J = (x_{j_1}, \dots, x_{j_k})$. For all $x,y\in[q]^n$ we set $\Delta(x,y):=\{i\in [n]\,:\,x_i\ne y_i\}$. The \textbf{Hamming distance} between $x$ and $y$ is $\dH(x,y)=|\Delta(x,y)|$. The Hamming weight of $x \in [q]^n$ is $\wH(x) = \{i : x_i \ne 0\} = \dH(x, (0,\dots,0))$. The volume of a ball of Hamming radius $t$ in $[q]^n$ does not depend on its centre, hence we denote it as 
$$
	\VH(q,n,t) = |\{ x \in [q]^n : \wH(x) \le t \}| = \sum_{d=0}^t \binom{n}{d} (q-1)^d.
$$

Let $f:[q]^n\to[q]^n$ be a Finite Dynamical System (FDS). The image of $f$ is denoted as $\Ima(f)$. We write the FDS as $f = (f_1, \dots, f_n)$ where $f_v : [q]^n \to [q]$ is a \textbf{local function} of $f$. We also use the shorthand notation $f_J : [q]^n \to [q]^{|J|}$, $f_J = (f_{j_1}, \dots, f_{j_k})$. We associate with $f$ the digraph $G(f)$, referred to as the {\bf interaction graph} of $f$, defined by: the vertex
set is $V$; and for all $u,v\in V$, there exists an arc $(u,v)$ if and only if $f_v$ depends essentially on $x_u$, i.e. there exist $x,y\in [q]^n$ that
only differ by $x_u\neq y_u$ such that $f_v(x)\neq f_v(y)$. For a digraph $D$, we denote by $F(D,q)$ the set of FDSs $f:[q]^n\to[q]^n$ with $G(f)\subseteq D$.

If $q$ is a prime power, we shall endow $[q]$ with the finite field structure $\GF(q)$. In that case, we say that $f$ is \textbf{linear} if every local function is a linear combination of the local variables in $x$: $f_v(x) = \sum_{i=1}^n m_{i,v} x_i$. In other words, $f$ is linear if it is of the form $f(x) = xM$ for some matrix $M \in \GF(q)^{n \times n}$. We say that $f$ is \textbf{affine} if $f = xM + y$ for some matrix $M \in \GF(q)^{n \times n}$ and some vector $y \in \GF(q)^n$.

\subsection{Guessing game} \label{sec:guessing}

The \textbf{guessing number} comes from the hat game called ``guessing game'' \cite{Rii06, Rii07, GR11}, where the team wins if and only if all players guess correctly, and the aim is to maximise the number of winning configurations. More formally, for any $f$, the set of fixed points of $f$ is denoted $\fix(f) = \{x \in [q]^n : f(x) = x\}$. Then the \textbf{guessing number} of $f$ is defined by
$$
	g(f) :=\log_q |\fix(f)|.
$$ 
Then the \textbf{$q$-guessing number} of $D$ is $g(D,q) :=\max_{f\in F(D,q)} g(f)$.

The guessing number has been studied in the context of network coding solvability \cite{Rii06, Rii07, GR11}. Most importantly, 
$$
	\nu(D) \le g(D,q) \le \tau(D)
$$
for all $q \ge 2$. If $g(D,q) = \tau(D)$, we then say that $D$ is \textbf{$q$-solvable}. The guessing number of the complete graph is $g(K_n, q) = n - 1$ for all $q \ge 2$, where the solution is
$$
	f_v(x) = - \sum_{u \ne v} x_u \mod q,
$$
hence the complete graph is $q$-solvable for all $q$. Moreover, the guessing number tends to a limit for large $q$: the guessing number of $D$ is $g(D) := \sup_q g(D,q) = \lim_{q \to \infty} g(D,q)$. If $g(D) = \tau(D)$, then we say that $D$ is \textbf{asymptotically solvable}. Clearly, if $D$ is $q$-solvable for some $q$, then it is asymptotically solvable. Thirdly, we can restrict the choice of FDSs to linear ones, thus yielding the linear guessing number $g_{\linear}(D,q)$. If $g_{\linear}(D,q) = \tau(D)$, we say that $D$ is \textbf{$q$-linearly solvable}. We also denote $g_{\linear}(D) := \max_q g_{\linear}(D,q)$. We could define the analogue for affine FDSs, but it is easy to check that $g_{\affine}(D,q) = g_{\linear}(D,q)$. We shall use the subscripts $\linear$ and $\affine$ throughout this paper.

We consider two interesting families of graphs for the guessing number. Firstly, the family of odd undirected cycles $\{ C_{2k+1} : k \ge 2 \}$ satisfies $n = 2k+1$ and \cite{Rii07, CM11}
$$
	\nu(C_{2k+1}) = k < g(C_{2k+1}) = k + 1/2 < \tau(C_{2k+1}) = k+1.
$$ 
These graphs are interesting for they are not asymptotically solvable.

Secondly, the \textbf{strong product} of two digraphs $D_1$ and $D_2$, denoted as $D_1 \boxtimes D_2$  is defined as such. Its vertex set is the cartesian product $V(D_1) \times V(D_2)$, and there is an arc from $(u_1,u_2)$ to $(v_1,v_2)$ if and only if either $u_1 = v_1$ and $(u_2, v_2) \in E(D_2)$, or $u_2 = v_2$ and $(u_1, v_1) \in E(D_1)$, or $(u_1,v_1) \in E(D_1)$ and $(u_2,v_2) \in E(D_2)$. Equivalently, the adjacency matrix of the strong product is given by
\begin{equation} \label{eq:D_product} \nonumber
    A_{D_1 \boxtimes D_2} = (I_{n_1} + A_{D_1}) \otimes (I_{n_2} + A_{D_2}) - I_{n_1 n_2},
\end{equation}
where $\otimes$ denotes the Kronecker product of matrices. The graph $\vec{C}_3^2 = \vec{C}_3 \boxtimes \vec{C}_3$ is displayed in Figure \ref{fig:C32}. Then, for every value of the girth $\gamma \ge 3$, the sequence $\{ \vec{C}_\gamma^k : k \ge 1\}$ satisfies $n = \gamma^k$, $\nu(\vec{C}_\gamma^k) = \gamma^{k-1}$, and
$$
	\tau(\vec{C}_\gamma^k) = g_{\linear}(\vec{C}_\gamma^k, q) = g(\vec{C}_\gamma^k) = \gamma^k - (\gamma - 1)^k
$$
for any prime power $q$ \cite{GR11}. Hence these graphs are always linearly solvable. Moreover, these graphs can be arbitrarily relatively sparse, and yet their guessing number is relatively close to the number of vertices.

%
%

\subsection{Index coding, public entropy, and the guessing graph} \label{sec:NC_IC}

The \textbf{$q$-public entropy} of $D$ is defined as the smallest amount of information that the players in the guessing game need to always guess correctly. More formally, let $H(D,q)$ denote the set of functions $h : [q]^n \to [q]^n$ such that there exists a family of functions $\{f^{(a)} \in F(D,q) : a \in \Ima(h)\}$ such that $x \in \fix(f^{(h(x))})$ for all $x \in [q]^n$. The function $h$ helps the players all guess correctly: given the value of $a = h(x)$, the players then choose to guess according to $f^{(a)}$. Then \cite{Rii06, Rii07}
$$
	b(D,q) = \min_{h \in H(D,q)} \log_q |\Ima(h)|.
$$
We have $b(D,q) \ge n - \tau(D)$. Denoting the public entropy of $D$ as $b(D) := \inf_{q \ge 2} b(D,q)$, again we obtain that $b(D) = \lim_{q \to \infty} b(D,q)$. The guessing number and the public entropy are closely related. Firstly, we always have $g(D,q) + b(D,q) \ge n$. Secondly, the solvability problem is equivalent to the analogue for the public entropy in three ways
\begin{enumerate}
	\item \textbf{Solvability equivalence.} For any $D$ and $q$, $g(D,q) = \tau(D)$ if and only if $b(D,q) = n - \tau(D)$. This is the equivalence between index coding and network coding in \cite{Rii07, Gad13, EEL13, Maz14}.
	
	\item \textbf{Asymptotic equivalence.} For any $D$, $b(D) = n - g(D)$ \cite{GR11}.
	
	\item \textbf{Linear equivalence.} For any $D$ and $q$, $b_{\linear}(D,q) = n - g_{\linear}(D,q)$ (see \cite{GR11} for instance).
\end{enumerate}

The \textbf{guessing graph} $\G(D,q)$ is the undirected graph with vertex set $[q]^n$ and where $x,y \in [q]^n$ are adjacent if and only if there is no $f \in F(D,q)$ such that $x,y \in \fix(f)$. More concretely, $\G(D,q)$ has vertex set $[q]^n$ and edge set $E = \bigcup_{v =1}^n \{\{x,y\} : x_{\inn(v)} = y_{\inn(v)}, x_v \ne y_v\}$. The guessing graph was first introduced in \cite{BBJK06, ALSWH08}, where it was referred to as ``confusion graph''. It was then independently introduced in \cite{GR11} and extended in two different fashions in \cite{Gad13, GRR15}.

By definition, any set of fixed points of some function $f \in F(D,q)$ is an independent set of $\G(D,q)$. Conversely, any independent set of the guessing graph is fixed by some FDS in $F(D,q)$; thus $g(D,q) = \log_q \alpha(\G(D,q))$, where $\alpha$ denotes the independence number \cite{GR11}. The public entropy can be viewed as the (logarithm of the) minimum number of parts in a partition into sets of fixed points:
$$
	b(D,q) = \min \left\{ \log_q |B| : B \subseteq F(D,q), \bigcup_{f \in B} \fix(f) = [q]^n \right\},
$$
hence $b(D,q) = \log_q \chi(\G(D,q))$, where $\chi$ denotes the chromatic number.

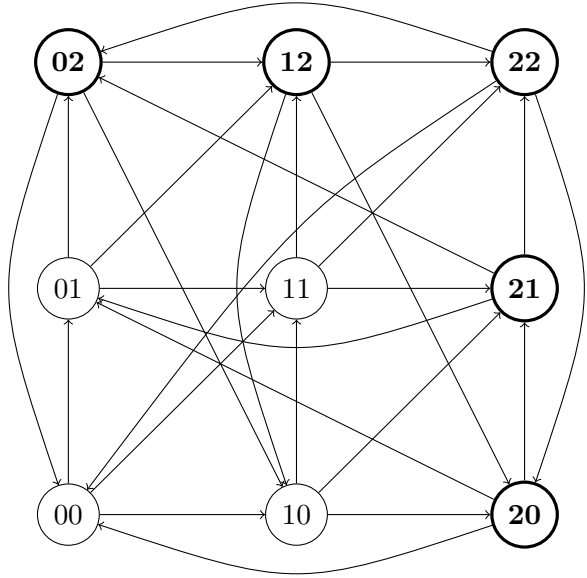
\begin{figure}
\centering
	\begin{tikzpicture}
	\tikzstyle{every node}=[draw, shape = circle];
	
	\node (00) at (0,0) {00};
	\node (01) at (0,3) {01};
	\node[very thick] (02) at (0,6) {${\bf 02}$};
	\node (10) at (3,0) {10};
	\node (11) at (3,3) {11};
	\node[very thick] (12) at (3,6) {${\bf 12}$};
	\node[very thick] (20) at (6,0) {${\bf 20}$};
	\node[very thick] (21) at (6,3) {${\bf 21}$};
	\node[very thick] (22) at (6,6) {${\bf 22}$};
	
	\draw[->] (00) -- (01);
	\draw[->] (00) -- (10);
	\draw[->] (00) -- (11);
	
	\draw[->] (01) -- (02);
	\draw[->] (01) -- (11);
	\draw[->] (01) -- (12);
	
	\draw[->] (02) .. controls(-1,3) .. (00);
	\draw[->] (02) -- (12);
	\draw[->] (02) -- (10);
	
	\draw[->] (10) -- (11);
	\draw[->] (10) -- (20);
	\draw[->] (10) -- (21);
	
	\draw[->] (11) -- (12);
	\draw[->] (11) -- (21);
	\draw[->] (11) -- (22);
	
	\draw[->] (12) .. controls(2,3) .. (10);
	\draw[->] (12) -- (22);
	\draw[->] (12) -- (20);
	
	\draw[->] (20) -- (21);
	\draw[->] (20) .. controls(3,-1) .. (00);
	\draw[->] (20) -- (01);
	
	\draw[->] (21) -- (22);
	\draw[->] (21) .. controls(3,2) .. (01);
	\draw[->] (21) -- (02);
	
	\draw[->] (22) .. controls(7,3) .. (20);
	\draw[->] (22) .. controls(3,7) .. (02);
	\draw[->] (22) .. controls(3,4) .. (00);
	
	\end{tikzpicture}
\caption{The digraph $\vec{C}_3^2$, with linear guessing number $5$. A minimum feedback vertex set is highlighted.}
\label{fig:C32}
\end{figure}

\subsection{Winkler's hat game} \label{sec:winkler}

In Winkler's hat game, the team scores a point for every correct guess. The main problem is: How many points can the team be guaranteed to score for any possible configuration of hats? In the language of FDS, let us define the \textbf{stability} of a FDS $f$ by 
$$
	s(f) := \min_{x\in[q]^n} n - \dH(x,f(x)),
$$
and the \textbf{$q$-stability} of $D$ is $s(D,q) = \max_{f \in F(D,q)} s(f)$. 

Then on a clique of size $q$, the optimal solution is to cover all possibilities of the sum of all $x_i$'s, i.e.
$$
	f_v(x) = v - \sum_{u \ne v} x_u \mod q.
$$
This guarantees exactly one correct guess (for $v = \sum_{u =1}^n x_u \mod q$) for any value of $x$. By double counting, this is the best possible. We note the similarity between the solutions for the guessing game and for Winkler's hat game on the clique. In general, packing disjoint copies of $K_q$ in the complete graph $K_n$ yields
$$
	s(K_n, q) = \left \lfloor \frac{n}{q} \right \rfloor.
$$
In particular, $s(K_n,q) = 0$ if $q > n$ and hence for any $D$ $s(D,q) = 0$ if $q > n$.

Some work has been done on $s(D,q)$, or on determining whether $s(D,q) > 0$, in which case $D$ is \textbf{$q$-stable} (it is referred to as $q$-solvable in \cite{GG15}, but that would obviously be confusing here). If $D$ is undirected, then $s(D,2) = M$, the size of a maximum matching in $D$; in general, $\nu(D) \le s(D,2) \le \tau(D)$ \cite{BHKL08}. For any $q$, there exists a $q$-stable bipartite undirected graph $D$ (first proved in \cite{BHKL08}, then refined in \cite{GG15}). Moreover, there exists a $4$-stable oriented graph \cite{GG15}. In \cite{GG15}, the authors ask whether there exists a $q$-stable oriented graph for all $q \ge 5$; we shall give an emphatic affirmative answer to that question in Theorem \ref{th:high_i}. 

Conversely, some graphs have been proved to be not $q$-stable. If $D$ is an undirected tree, then $D$ is not $3$-stable \cite{BHKL08}. The complete bipartite graph $K_{m,s}$ is not $(m+2)$-stable for any $s \ge 1$ \cite{GG15}. If $\tau(D) = 1$, then $D$ is not $3$-stable \cite{GG15}. The undirected cycle $C_n$ is $3$-stable if and only if $n=4$ or $n$ is divisible by $3$; moreover, $C_n$ is not $4$-stable for all $n$ \cite{Szc14}.

\section{The guessing and coset dimensions} \label{sec:guessing_code}

For any $f$, the \textbf{guessing code} and the \textbf{guessing dimension} of $f$ are
\begin{align*}
	C_f &:= \{ f(x) - x : x \in [q]^n \},\\
	l(f) &:= \log_q |C_f|.
\end{align*}
The $q$-\textbf{guessing dimension} of $D$ is then denoted as $l(D,q) := \min_{f \in F(D,q)} l(f)$, and also $l(D) := \inf_{q \ge 2} l(D,q)$. The guessing dimension of a graph is closely related to the public entropy $b(D,q)$.

\begin{lemma} \label{lem:l_covering}
For any $D$ and $q$,
$$
	l(D,q) = \min \left\{ \log_q |S| :  S \subseteq [q]^n, \, \exists f \in F(D,q) : \bigcup_{a \in S} \fix(f - a) = [q]^n \right\},
$$
whence $l(D,q) \ge b(D,q) \ge n - \tau(D)$.
\end{lemma}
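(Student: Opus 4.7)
The plan is to read the equality as a straightforward counting statement about level sets of the map $x \mapsto f(x) - x$, and then to produce the inequality $l(D,q) \ge b(D,q)$ by exhibiting an explicit helper function $h$ built from an optimal $f$.

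First I would unpack the meaning of $\fix(f-a)$. Since $x \in \fix(f-a)$ iff $f(x) - a = x$ iff $f(x) - x = a$, the set $\fix(f-a)$ is exactly the fibre over $a$ of the map $\varphi_f : [q]^n \to [q]^n$, $\varphi_f(x) = f(x) - x$. Consequently, for any $S \subseteq [q]^n$,
$$
	\bigcup_{a \in S} \fix(f - a) = \varphi_f^{-1}(S),
$$
which equals $[q]^n$ if and only if $S \supseteq \Ima(\varphi_f) = C_f$. Thus, for any fixed $f \in F(D,q)$, the minimum value of $|S|$ over all covering sets $S$ is exactly $|C_f|$, attained at $S = C_f$. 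Minimising now over $f \in F(D,q)$ yields the claimed equality $l(D,q) = \min \{ \log_q |S| : \ldots \}$.

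Next, for $l(D,q) \ge b(D,q)$, pick an $f \in F(D,q)$ achieving $l(f) = l(D,q)$ and define $h := \varphi_f$, so that $\Ima(h) = C_f$. For every $a \in C_f$, set $f^{(a)} := f - a$ (subtracting the constant vector $a$ coordinatewise). The key observation is that shifting $f$ by a constant does not change essential dependencies, so $G(f^{(a)}) = G(f) \subseteq D$, whence $f^{(a)} \in F(D,q)$. By construction, $x \in \fix(f^{(h(x))})$ for every $x \in [q]^n$, so $h \in H(D,q)$; therefore $b(D,q) \le \log_q |\Ima(h)| = \log_q |C_f| = l(D,q)$. The last inequality $b(D,q) \ge n - \tau(D)$ has already been recorded in Section \ref{sec:NC_IC}.

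The one mildly delicate point is ensuring that the shift $f - a$ is well defined and preserves the interaction graph; this relies on endowing $[q]$ with the abelian group structure $\mathbb{Z}/q\mathbb{Z}$ (or $\GF(q)$ when $q$ is a prime power), which is already implicit in the definitions of $C_f$ and of the guessing game solution on the complete graph. Once that is in place, everything else is routine bookkeeping about level sets and their images.
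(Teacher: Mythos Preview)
Your argument is correct and follows essentially the same route as the paper: both reduce the covering condition $\bigcup_{a\in S}\fix(f-a)=[q]^n$ to $C_f\subseteq S$ via the observation $x\in\fix(f-a)\iff f(x)-x=a$, and then minimise. You are simply more explicit than the paper about the ``whence'' clause, spelling out the helper $h=\varphi_f$ and noting that $G(f-a)=G(f)$, whereas the paper leaves the inequality $l(D,q)\ge b(D,q)$ as an immediate consequence of the displayed equation together with the covering description of $b(D,q)$.
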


\begin{proof}
We have 
\begin{align*}
	\bigcup_{a \in S} \fix(f - a) = [q]^n &\Leftrightarrow \forall x \in [q]^n \, \exists a \in S : x \in \fix(f-a)\\
	&\Leftrightarrow  \forall x \in [q]^n \, \exists a \in S : f(x) - x = a\\
	&\Leftrightarrow C_f \subseteq S,
\end{align*}
and the equation follows.
\end{proof}

We now define the \textbf{coset dimension} of $f$ by
$$
	c(f) := \min \left\{ \log_q |S| :  S \subseteq [q]^n, \, \exists f \in F(D,q) : \bigcup_{a \in S} ( \fix(f) - a) = [q]^n \right\},
$$
and the \textbf{$q$-coset dimension} of $D$ is $c(D,q) := \min_{f \in F(D,q)} c(f)$, and also $c(D) := \inf_{q \ge 2} c(D,q)$.

\begin{lemma} \label{lem:c_covering}
For any $D$ and $q$, $c(D,q) \ge b(D,q) \ge n - \tau(D)$.
\end{lemma}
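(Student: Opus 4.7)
The plan is to separate the two inequalities. The right-hand inequality $b(D,q) \ge n - \tau(D)$ is already recorded in Section~\ref{sec:NC_IC}, so nothing new is required there. The substantive part is to show $c(D,q) \ge b(D,q)$, and the natural strategy is to convert a single-function covering by translates of one fixed-point set into a many-function covering by the fixed-point sets of several FDSs in $F(D,q)$.

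Concretely, I would start with $f \in F(D,q)$ and $S \subseteq [q]^n$ realising the minimum in the definition of $c(D,q)$, so $\bigcup_{a \in S} (\fix(f) - a) = [q]^n$ and $\log_q |S| = c(D,q)$. Viewing $[q]$ as the cyclic group $\mathbb{Z}/q\mathbb{Z}$ (as is already implicit in the definitions of $\fix(f-a)$ used in Lemma~\ref{lem:l_covering} and of $\fix(f) - a$ used in the definition of $c(f)$), I would then define, for each $a \in S$, the translate $g^{(a)} : [q]^n \to [q]^n$ by $g^{(a)}(x) := f(x+a) - a$.

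Two routine verifications then finish the argument. First, $g^{(a)} \in F(D,q)$: shifting the input by a fixed vector and subtracting a fixed vector at the output preserves, coordinate by coordinate, the set of variables on which each local function depends essentially, so $G(g^{(a)}) = G(f) \subseteq D$. Second, $\fix(g^{(a)}) = \fix(f) - a$, since $g^{(a)}(x) = x$ is equivalent to $f(x+a) = x+a$, i.e.\ to $x+a \in \fix(f)$. The family $B := \{g^{(a)} : a \in S\} \subseteq F(D,q)$ therefore satisfies $\bigcup_{g \in B} \fix(g) = \bigcup_{a \in S}(\fix(f) - a) = [q]^n$ and $|B| \le |S|$, whence $b(D,q) \le \log_q |B| \le \log_q |S| = c(D,q)$, as required.

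I do not anticipate any real obstacle. The only mildly delicate point is checking that the translation $f \mapsto g^{(a)}$ does not enlarge the interaction graph, but this is immediate from the definition of essential dependence: the predicate ``$f_v$ depends essentially on $x_u$'' is invariant under any bijective reparametrisation of a single input coordinate, and a constant additive shift in each coordinate (together with a constant additive shift at the output) is precisely such a reparametrisation.
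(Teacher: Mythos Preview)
Your proposal is correct and follows essentially the same approach as the paper: the paper also defines $g(x)=f(x+a)-a$, notes that $g\in F(D,q)$ and $\fix(g)=\fix(f)-a$, and concludes $c(D,q)\ge b(D,q)$ from the definitions. Your write-up simply fills in the routine verifications (preservation of the interaction graph and the equality of fixed-point sets) that the paper leaves implicit.
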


\begin{proof}
For any $f \in F(D,q)$ and any $a \in [q]^n$, define $g \in F(D,q)$ by $g(x) = f(x+a) - a$. Then $\fix(g) = \fix(f) - a$, and by definition of $c(D,q)$ and $b(D,q)$, $c(D,q) \ge b(D,q)$.
\end{proof}

We strengthen the triple equivalence between network coding and index coding by replacing the public entropy by the coset dimension.

\begin{theorem}[Equivalences] \label{th:g+l=n}
We have three kinds of equivalences.
\begin{enumerate}
	\item \textbf{Solvability equivalence.} For any $D$ and $q$, the following are equivalent.
	\begin{enumerate}
		\item $g(D,q) = \tau(D)$.
		\item $b(D,q) = n - \tau(D)$.
		\item $c(D,q) = n - \tau(D)$.
	\end{enumerate}
	
	\item \textbf{Asymptotic equivalence.} For any $D$,
	$$
		c(D) = \lim_{q \to \infty} c(D,q) = b(D) = n - g(D).
	$$
	
	\item \textbf{Linear equivalence.} For any $D$ and $q$,
	\begin{align*}
		l_{\affine}(D,q) &= l_{\linear}(D,q) = c_{\affine}(D,q) = c_{\linear}(D,q)\\
		c_{\linear}(D,q) &= b_{\linear}(D,q) = n - g_{\linear}(D,q).
	\end{align*}
\end{enumerate}
\end{theorem}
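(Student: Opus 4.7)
The plan is to exploit the already-known chain involving $b(D,q)$ and to sandwich the new quantities $c(D,q)$ and $l(D,q)$ between $b(D,q)$ and explicit upper bounds. Lemmas~\ref{lem:l_covering} and \ref{lem:c_covering} supply the lower bounds $l(D,q), c(D,q) \ge b(D,q) \ge n - \tau(D)$, so in each of the three parts only a matching construction remains.

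For the solvability equivalence, $(c) \Rightarrow (b) \Rightarrow (a)$ is immediate from these lower bounds combined with the known equivalence $(a) \Leftrightarrow (b)$. To prove $(a) \Rightarrow (c)$ I would fix a solution $f \in F(D,q)$ with $|\fix(f)| = q^{\tau(D)}$ and a minimum feedback vertex set $I$. The projection $\pi : \fix(f) \to [q]^I$, $x \mapsto x_I$, is bijective: it is surjective because any assignment on $I$ extends uniquely to a fixed point by computing $x_v = f_v(x_{\inn(v)})$ in topological order on $V \setminus I$, and the two sides have equal cardinality $q^{\tau(D)}$. Taking $S := \{a \in [q]^n : a_I = 0\}$, which has size $q^{n-\tau(D)}$, I would check covering: given any $y \in [q]^n$, let $z$ be the unique fixed point with $z_I = y_I$ and set $a := z - y$; then $a \in S$ and $y \in \fix(f) - a$. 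This yields $c(D,q) \le n - \tau(D)$, closing the triangle.

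For the asymptotic equivalence, the step I expect to be the most delicate is a random covering argument: given $f \in F(D,q)$ with $|\fix(f)| = q^{g(f)}$, a uniformly random $a \in [q]^n$ satisfies $y \in \fix(f) - a$ with probability $q^{g(f)-n}$, so a union bound over the $q^n$ points of $[q]^n$ produces a set $S$ of size at most $\lceil q^{n-g(f)}(n \ln q + 1)\rceil$ with $\bigcup_{a \in S}(\fix(f) - a) = [q]^n$. Hence $c(D,q) \le n - g(D,q) + \log_q(n \ln q + 1)$. Combined with $c(D,q) \ge b(D,q)$ and the known $b(D) = n - g(D)$, the sequence $c(D,q)$ is sandwiched between two sequences converging to $n - g(D)$ as $q \to \infty$, so the limit exists and equals $n - g(D)$.

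For the linear and affine equivalences the key observation is that linearity turns the inequalities of Lemmas~\ref{lem:l_covering} and \ref{lem:c_covering} into equalities. For linear $f(x) = xM$ one has $\fix(f) = \ker(M - I_n)$ and $C_f = \Ima(M - I_n)$, so $l(f) = n - g_{\linear}(f)$ by rank-nullity, and the $q^{n-g_{\linear}(f)}$ cosets of the subspace $\fix(f)$ partition $[q]^n$, giving $c(f) = n - g_{\linear}(f)$ as well. For affine $f(x) = xM + y$ with $\fix(f) \ne \emptyset$, $\fix(f)$ is a coset of $\ker(M - I_n)$ and the same tiling argument applies, while any kernel dimension achievable by an affine $f$ is already realized by its linear part (take $y = 0$); minimizing over $f$ yields the first chain of equalities, which matches the known $b_{\linear}(D,q) = n - g_{\linear}(D,q)$. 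The remaining work — the combinatorial tiling and the algebraic identities — is essentially routine once the right bijections are identified, so the probabilistic covering in the asymptotic case is the genuinely new ingredient.
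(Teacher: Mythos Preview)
Your proof is correct and follows essentially the same three-part strategy as the paper: in Part~1 you cover $[q]^n$ by translates of $\fix(f)$ supported on the complement of a minimum feedback vertex set, in Part~2 you sandwich $c(D,q)$ using a probabilistic covering bound (the paper obtains the same upper bound by citing Babai's result on vertex-transitive graphs rather than writing out the union-bound calculation), and in Part~3 you use rank--nullity and the coset partition of $\GF(q)^n$ by $\ker(M-I_n)$ exactly as the paper does. One small caution on Part~1: the sentence ``any assignment on $I$ extends uniquely to a fixed point'' is not justified by the topological extension alone, since that only forces $f_j(x)=x_j$ for $j\in V\setminus I$; what you are really using is that the extension map is \emph{injective} on $\fix(f)$, which together with your cardinality observation $|\fix(f)|=q^{\tau(D)}=|[q]^I|$ gives the bijection---the paper reaches the same injectivity via the guessing graph (any two fixed points must differ on a set containing a cycle, hence on $I$).
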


\begin{proof}
1. Solvability equivalence. The equivalence of the first two properties is reviewed in Section \ref{sec:background}. Suppose that $c(D,q) = n - \tau(D)$, then by Lemma \ref{lem:c_covering}, $b(D,q) = n - \tau(D)$ and hence $g(D,q) = \tau(D)$. Conversely, let $g(D,q) = \tau(D)$ and let $\fix(f) = \{z^i : i=1, \dots, q^{\tau(D)}\}$ be an independent set of $\G(D,q)$ of size $q^{\tau(D)}$. Let $I$ be a minimum feedback vertex set of $D$. 

Firstly, $z^i_I \ne z^j_I$ for all $i \ne j$. Indeed, by the proof of \cite[Proposition 5]{GR11}, the set $\Delta(z^i, z^j)$ must contain a cycle of $D$, hence it intersects $I$. Secondly, for any $a_J \in [q]^{n - \tau(D)}$, let $a = (0_I, a_J)$. Then for any $i,j$,  $\Delta(z^i -a, z^j -a) = \Delta(z^i, z^j)$ and hence $z^i -a \not\sim z^j -a$. Thus denoting $S := \{a : a_J \in [q]^{n - \tau(D)}\}$, we have $\bigcup_{a \in S} (\fix(f) - a) = [q]^n$ and by Lemma \ref{lem:l_covering}, $c(D,q) = n - \tau(D)$.\\
~\\
2. Asymptotic equivalence. We prove that
$$
	n - g(D,q) \le c(D,q) \le n - g(D,q) + \log_q(1 + q \ln n).
$$
The lower bound follows from Lemma \ref{lem:c_covering}, while the upper bound comes from the proof of \cite[Proposition 3.12]{Bab94} and the fact that the guessing graph is vertex-transitive.\\
~\\
3. Linear equivalence. It is easy to see that $l_{\linear}(D,q) = l_{\affine}(D,q)$ and that $c_{\affine}(D,q) = c_{\linear}(D,q)$ (by translation).  Then let us denote $C^{-1}_f(z) := \{x \in [q]^n : f(x) - x = z\}$. If $f$ is linear, then $|C^{-1}_f(z)| = \frac{q^n}{|C_f|}$ for all $z$.  By translation, we easily obtain
\begin{align*}
	g_{\linear}(D,q) &= \log_q \max_{f \in F(D,q), f \text{ linear}} \max_{z \in [q]^n} |C^{-1}_f(z)|\\
	&= \log_q \max_{f \in F(D,q), f \text{ linear}} \frac{q^n}{|C_f|}\\
	&= n - l_{\linear}(D,q).
\end{align*}
Finally, it is clear that if $f$ is linear, then $c(f) = n - g(f)$, hence $c_{\linear}(D,q) = n - g_{\linear}(D,q)$.
\end{proof}

\section{Stability and instability} \label{sec:instability}

For an FDS $f$, we define the \textbf{instability} of $f$ as
$$
	i(f) :=\min_{x\in[q]^n} \dH(x,f(x)).
$$
For any digraph $D$, the \textbf{$q$-instability} of $D$ is $i(D,q) :=\max_{f\in F(D,q)} i(f)$, and again the instability of $D$ is $i(D) = \max_{q \ge 2} i(D,q)$. 

\subsection{General properties} \label{sec:instability_general}

\begin{proposition} \label{prop:i_increasing}
For every digraph $D$ we have $i(D,2)=s(D,2)$. Moreover, for every digraph $D$ and $q\ge 2$ we have 
\begin{align*}
	i(D,q) &\le i(D,q+1),\\
	s(D,q) &\ge s(D,q+1).
\end{align*}
\end{proposition}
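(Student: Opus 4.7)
The three parts rely on natural maps between alphabets of sizes $q$ and $q+1$ (and, for the first part, an involution on $F(D,2)$).

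For $i(D,2)=s(D,2)$, the plan is to use the bitwise-complement involution $f \mapsto \tilde f$ on $F(D,2)$ defined by $\tilde f(x) := f(x) \oplus \mathbf{1}$. Complementing each local function $f_v$ preserves its essential dependencies, so $G(\tilde f) = G(f)$ and $\tilde f \in F(D,2)$. Over the binary alphabet one has $\dH(x,\tilde f(x)) = n - \dH(x,f(x))$ for every $x$, hence $i(\tilde f) = s(f)$. Since $\tilde{\cdot}$ is a bijection on $F(D,2)$, maximising over $f$ on both sides yields $i(D,2)=s(D,2)$.

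For $i(D,q) \le i(D,q+1)$, I would extend any $f \in F(D,q)$ to $f' \in F(D,q+1)$ by precomposition with the coordinatewise retraction $\pi:[q+1] \to [q]$ that fixes $\{0,\dots,q-1\}$ and sends $q$ to $0$. Set $f'(x) := f(\pi(x))$. Then $G(f') \subseteq G(f) \subseteq D$ is immediate: if $f'_v(x)\ne f'_v(y)$ for $x,y \in [q+1]^n$ differing only at coordinate $u$, then $\pi(x),\pi(y) \in [q]^n$ also differ only at $u$ (otherwise $\pi(x)=\pi(y)$ and $f_v$ agrees), and $f_v$ essentially depends on $x_u$. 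The bookkeeping step is $i(f') \ge i(f)$: if $k$ coordinates of $x$ equal the new symbol $q$, those are automatic disagreements between $x$ and $f'(x) \in [q]^n$, and on the other $n-k$ coordinates $x$ agrees with $\pi(x)$, so the number of disagreements there is at least $\dH(\pi(x),f(\pi(x))) - k \ge i(f) - k$. Summing gives $\dH(x,f'(x)) \ge i(f)$.

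For $s(D,q) \ge s(D,q+1)$, I would apply the symmetric construction: given $f \in F(D,q+1)$ set $g(x) := \pi(f(x))$ for $x \in [q]^n$, with the same $\pi$. Again $G(g) \subseteq G(f)$ since composing with a map on the target can only drop essential dependencies. The key observation is that every agreement of $f$ on $[q]^n$ survives as an agreement of $g$: if $x \in [q]^n$ and $f_v(x) = x_v$, then $f_v(x) \in [q]$, so $g_v(x) = \pi(f_v(x)) = f_v(x) = x_v$. Hence $n - \dH(x,g(x)) \ge n - \dH(x,f(x))$ for every $x \in [q]^n$, and minimising over $[q]^n$ (a subset of $[q+1]^n$) gives $s(g) \ge s(f)$, so $s(D,q) \ge s(D,q+1)$.

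None of the three steps is a serious obstacle; the only slightly delicate point is the case-split by the number of coordinates of $x$ that equal the new symbol $q$ in the extension argument.
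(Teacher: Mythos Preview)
Your proof is correct and follows essentially the same approach as the paper. The only cosmetic differences are that you retract the new symbol $q$ to $0$ rather than to $q-1$, and that for the stability inequality you spell out the natural dual construction (postcomposing with the retraction $\pi$ on the target, starting from $f\in F(D,q+1)$) where the paper merely says ``the proof for the stability number is similar''; your version makes explicit exactly the step the paper leaves to the reader.
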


\begin{proof}
For every $f\in F(D,2)$, we have $i(f)=s(\neg f)$ and $\neg f\in F(D,2)$, hence $i(D,2)=s(D,2)$ and $i(D,2) = s(D,2)$. 

For every $x\in [q+1]^n$ let $x'\in[q]^n$ be defined by $x'_i=\min(x_i,q - 1)$ for all $i\in [n]$. Let $f:[q]^n\to [q]^n$ and let $f':[q+1]^n\to [q+1]^n$ be defined by $f'(x)=f(x')$ for all $x\in[q+1]^n$. It is easy to see that $G(f')=G(f)$. Furthermore $\dH(x,f'(x)) \ge \dH(x',f(x'))$ for all $x\in [q+1]^n$, and thus $i(f')\geq i(f)$. The proof for the stability number is similar.
\end{proof}

\begin{proposition} \label{prop:bounds_i}
For every digraph $D$ and $q\geq 2$ we have 
\[
	\nu(D) \le i(D,q) \le \tau(D).
\]
\end{proposition}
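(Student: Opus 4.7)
The plan is to establish each inequality by an explicit construction, mirroring the classical argument for the chain $\nu(D) \le g(D,q) \le \tau(D)$ for the guessing number.

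For the lower bound $\nu(D) \le i(D,q)$, I would exhibit an $f \in F(D,q)$ with $i(f) \ge \nu(D)$. Fix a family of $\nu(D)$ vertex-disjoint cycles $C_1, \dots, C_{\nu(D)}$ in $D$. On the $j$-th cycle, whose vertices in cyclic order are $v^j_1 \to v^j_2 \to \cdots \to v^j_{k_j} \to v^j_1$, set
\[
f_{v^j_{i+1}}(x) := x_{v^j_i} \quad (1 \le i < k_j), \qquad f_{v^j_1}(x) := x_{v^j_{k_j}} + 1 \bmod q,
\]
and for every vertex $v$ not appearing on any of these cycles, let $f_v$ be the constant $0$. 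By construction $G(f) \subseteq D$. The key point is that if $f_u(x) = x_u$ were to hold at every $u$ of some $C_j$, then chaining $x_{v^j_{i+1}} = x_{v^j_i}$ around $C_j$ would force $x_{v^j_1} = x_{v^j_1} + 1 \bmod q$, which is impossible because $q \ge 2$. Hence each of the $\nu(D)$ pairwise disjoint cycles contributes at least one coordinate to $\Delta(x,f(x))$, so $\dH(x,f(x)) \ge \nu(D)$ for every $x$ and thus $i(f) \ge \nu(D)$.

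For the upper bound $i(D,q) \le \tau(D)$, I would show that every $f \in F(D,q)$ admits some $x$ with $\dH(x,f(x)) \le \tau(D)$. Pick a minimum feedback vertex set $I$, so that $D[V \setminus I]$ is acyclic, and topologically sort $V \setminus I = \{j_1, \dots, j_{n-\tau(D)}\}$. Choose $x_I$ arbitrarily (say $x_I := 0$). Then process $a = 1, 2, \dots, n - \tau(D)$ in order: because $j_a$ receives no arc in $D$ from $\{j_{a+1},\dots,j_{n-\tau(D)}\}$, the value $f_{j_a}(x)$ depends only on coordinates that have already been assigned, so one may consistently set $x_{j_a} := f_{j_a}(x)$. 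The resulting state satisfies $f_{V \setminus I}(x) = x_{V \setminus I}$, whence $\dH(x,f(x)) \le |I| = \tau(D)$, and $i(f) \le \tau(D)$.

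Neither direction presents a serious obstacle. The only subtle point is the ``$+1$'' twist in the cycle construction for the lower bound, which is precisely what prevents any $x$ from being a fixed point of $f$ when restricted to a given $C_j$; without it, any constant assignment on the cycle would work. The upper bound is then a verbatim propagation argument along the topological order of $V \setminus I$, identical in spirit to the standard proof that $g(D,q) \le \tau(D)$.
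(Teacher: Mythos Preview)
Your proof is correct and follows essentially the same approach as the paper's: the upper bound via topological propagation on $V \setminus I$ is identical, and for the lower bound the paper uses the same ``shift-with-a-twist'' construction on each packed cycle, though it first treats the binary case (with $\neg$ in place of $+1$) and then invokes Proposition~\ref{prop:i_increasing} to pass to general $q$, whereas you work directly over $[q]$.
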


\begin{proof}
For a cycle $C$ we have $i(C,2)=1$, by the following function:
$$
	f_i(x) = \begin{cases}
		\neg x_n &\text{if } i = 1\\
		x_{i-1} &\text{if } 2 \le i \le n.
	\end{cases}
$$
Indeed, if $x$ is a fixed point, we have $x_n + 1 = x_1 = x_2 = \dots = x_n$. By Proposition \ref{prop:i_increasing}, we obtain $i_{\affine}(C,q) \ge 1$ for all $q \ge 2$ and thus the lower bound is obtained by packing negative cycles. For the upper bound, let $f\in F(D,q)$ and let $I$ be a feedback vertex set of $D$ with of minimum size. Since $D-I$ has no cycle we deduce that for every $x\in [q]^I$ there exists at least one $x'\in[q]^n$ such that $x'_I = x$ and $\Delta(x',f(x'))\subseteq I$. Then $i(D,q)\le \dH(x',f(x'))\leq |I|=\tau(D)$. 
\end{proof}

\begin{proposition} \label{prop:i(Kn)}
For every $n\geq q\geq 2$, 
\[
	i(K_n,q)=n-\left\lceil \frac{n}{q}\right\rceil.
\]
\end{proposition}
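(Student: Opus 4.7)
The plan is to prove matching upper and lower bounds.

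For the upper bound $i(K_n,q) \le n - \lceil n/q \rceil$, I will use a double counting argument. Fix any $f \in F(K_n,q)$. Since $K_n$ is loopless, each local function $f_i$ does not depend on $x_i$, so for every fixed assignment of the remaining coordinates there is exactly one value of $x_i$ making $x_i = f_i(x)$. Therefore
\[
  \sum_{x \in [q]^n} |\{i : x_i = f_i(x)\}| \;=\; \sum_{i=1}^n |\{x \in [q]^n : x_i = f_i(x)\}| \;=\; n\, q^{n-1},
\]
so the average over $x \in [q]^n$ of the number of agreeing coordinates equals $n/q$. Some $x$ therefore has at least $\lceil n/q \rceil$ agreements, giving $\dH(x, f(x)) \le n - \lceil n/q \rceil$ and hence $i(f) \le n - \lceil n/q \rceil$.

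For the matching lower bound, I will construct $f \in F(K_n, q)$ attaining instability $n - \lceil n/q \rceil$, mimicking the solutions already used on $K_n$ for the guessing game and for Winkler's game. Partition $\{1,\ldots,n\}$ into $m := \lceil n/q \rceil$ blocks $B_1, \ldots, B_m$, each of size at most $q$. Within each block $B_j = \{v_1, \ldots, v_{|B_j|}\}$, set
\[
  f_{v_a}(x) \;:=\; a - \sum_{u \in B_j \setminus \{v_a\}} x_u \mod q, \qquad a = 1, \ldots, |B_j|.
\]
Then $x_{v_a} = f_{v_a}(x)$ if and only if $\sum_{u \in B_j} x_u \equiv a \mod q$. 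Since $|B_j| \le q$, the labels $1, \ldots, |B_j|$ are pairwise distinct modulo $q$, so at most one coordinate per block can agree with $f$ on any given $x$. Summing over the $m$ blocks, at most $m = \lceil n/q \rceil$ coordinates agree, giving $\dH(x, f(x)) \ge n - \lceil n/q \rceil$ for every $x$ and hence $i(f) \ge n - \lceil n/q \rceil$.

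No serious obstacle arises: the argument parallels the known stability identity $s(K_n, q) = \lfloor n/q \rfloor$, but exchanges the roles of ``best'' and ``worst'' configurations, and uses that $K_n$ has all arcs available so each $f_v$ may depend on every other $x_u$. The only subtlety is the handling of the last block when $q \nmid n$, which is resolved automatically since the labels $1, \ldots, |B_j|$ remain distinct modulo $q$ whenever $|B_j| \le q$.
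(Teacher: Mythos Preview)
Your proof is correct and follows essentially the same approach as the paper: the upper bound is the identical double-counting argument (each coordinate agrees on exactly $q^{n-1}$ inputs, so on average $n/q$ coordinates agree and some $x$ attains at least $\lceil n/q\rceil$ agreements), and the lower bound is the same block decomposition into $\lceil n/q\rceil$ cliques of size at most $q$ with Winkler's solution on each block. The only cosmetic difference is that the paper quotes $i(K_p,p)=p-1$ and monotonicity in $q$ to handle each block, whereas you write out the explicit local functions and observe directly that at most one label $a\in\{1,\dots,|B_j|\}$ can match $\sum_{u\in B_j}x_u\bmod q$; this also handles the residual block uniformly without a separate case.
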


\begin{proof}
Let $n=kq+r$ where $k$ and $r$ are integers and $0\leq r<q$. The classical solution of Winkler's hat game shows that $i(K_p,p)=p-1$ for every $p\geq 2$. Thus if $p\leq n$ then $i(K_p,n)\geq p-1$. Therefore, by taking the union of $k$ disjoint copies of $K_q$ and one residual $K_r$ we obtained a spanning subgraph $H$ of $K_n$ such that if $r > 0$,
\[
i(K_n,q)\geq i(H,q)\geq k(q-1)+r-1=n-\left\lceil\frac{n}{q}\right\rceil, 
\]
and the same end result holds for $r=0$.

Conversely, let $f\in F(K_n,q)$ with $i(f)=i(K_n,q)$. By double counting, again we obtain  
\[
	q^n i(K_n,q) = q^ni(f) \le n(q-1)q^{n-1}
\] 
showing that $i(K_n,q)\le \left\lfloor n - \frac{n}{q}\right\rfloor$.
\end{proof}

For any $f$, we have $s(f) + i(f) \ge n$ by definition. However, we obtain the result in the opposite direction for digraphs.

\begin{corollary} \label{cor:s+i}
For any $D$ and $q$, $s(D,q) + i(D,q) \le n$.
\end{corollary}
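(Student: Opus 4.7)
My plan is to bound $s(D,q)$ and $i(D,q)$ separately by the same double-counting idea that was already used in the proof of Proposition \ref{prop:i(Kn)}. The key observation is that for any $f \in F(D,q)$, because $D$ is loopless each local function $f_v$ does not depend on $x_v$; hence for every vertex $v$ exactly $(q-1)q^{n-1}$ inputs $x \in [q]^n$ satisfy $x_v \neq f_v(x)$. Summing over $v$ would give
\[
    \sum_{x \in [q]^n} \dH(x, f(x)) \;=\; n(q-1)q^{n-1},
\]
a quantity that is independent of $f$. Dividing by $q^n$, the average of $\dH(x,f(x))$ over $x \in [q]^n$ is exactly $n(q-1)/q$.

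From this observation the two wanted bounds follow by minimality. Since $i(f) = \min_{x} \dH(x,f(x))$ cannot exceed its own average, $i(f) \le n(q-1)/q$; applying the symmetric argument to $n - \dH(x,f(x))$, whose average is $n/q$, yields $s(f) \le n/q$. Both inequalities are uniform in $f \in F(D,q)$, so taking maxima on each side gives $s(D,q) \le n/q$ and $i(D,q) \le n(q-1)/q$; summing produces the claimed bound $s(D,q) + i(D,q) \le n$.

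The only mildly subtle point worth flagging is that one cannot derive the corollary directly from the pointwise inequality (comparing $n - \max_x \dH$ with $\min_x \dH$), because $s(D,q)$ and $i(D,q)$ may be attained by different FDSs in $F(D,q)$. The averaging approach circumvents this difficulty precisely because each of $s(f)$ and $i(f)$ is controlled by the same $f$-independent quantity $n(q-1)q^{n-1}$, so no alignment of maximisers is needed.
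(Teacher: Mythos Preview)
Your proof is correct and uses essentially the same idea as the paper. The paper states the result as a corollary of Proposition~\ref{prop:i(Kn)} together with the known value $s(K_n,q)=\lfloor n/q\rfloor$: since any loopless $D$ is a subgraph of $K_n$, one gets $s(D,q)+i(D,q)\le \lfloor n/q\rfloor + (n-\lceil n/q\rceil)\le n$. Both of those upper bounds for $K_n$ come from precisely the double-counting identity $\sum_x \dH(x,f(x))=n(q-1)q^{n-1}$ that you wrote down, so your argument simply applies that identity directly to an arbitrary $D$ without passing through $K_n$; the substance is identical.
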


\subsection{Suprema of stability and instability} \label{sec:i=tau}

We know that the stability $s(D,q) = 0$ for $q$ large enough; we shall also prove in Theorem \ref{th:i=tau} below that $i(D,q) = \tau(D)$ for $q$ large enough. Therefore, we also investigate how fast these asymptotic bounds are reached.

For any $D$, let $E'(D)$ be the set of chordless cycles of $D$, let $L(D)$ be the undirected graph on $E'(D)$ such that two chordless cycles are adjacent if they meet in at least one vertex of $D$, and let $\chi'(D)$ be the chromatic number of $L(D)$. In particular,  if $D$ is undirected, $E'(D)$ is the edge set of $D$, $L(D)$ is the line graph of $D$, and $\chi'(D)$ is the chromatic index of $D$. According to Vizing's theorem, $\chi'(D) \in \{\Delta(D), \Delta(D) + 1 \}$ if $D$ is undirected \cite{BM08}.

\begin{theorem} \label{th:i=tau}
For any digraph $D$, 
$$
	i(D) = \lim_{q \to \infty} i(D,q) = \tau(D).
$$
Moreover, we have $i(D,q) = \tau(D)$ if $q = 2^{\chi'(D)}$ or if $q = 2^{\Delta(D)}$ and $D$ is undirected.
\end{theorem}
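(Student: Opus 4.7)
Since Proposition \ref{prop:bounds_i} gives $i(D,q) \le \tau(D)$ and Proposition \ref{prop:i_increasing} gives $i(D,q) \le i(D,q+1)$, it suffices to exhibit, for $q = 2^{\chi'(D)}$, an FDS $f \in F(D,q)$ with $i(f) = \tau(D)$; the equality $\lim_{q\to\infty} i(D,q) = i(D) = \tau(D)$ then follows at once. My first observation is that $i(f) \ge \tau(D)$ is equivalent to the statement that for every chordless cycle $C$ of $D$, there is no $x \in [q]^n$ with $x_v = f_v(x)$ on $V(C)$. The reason is that $\dH(x,f(x)) < \tau(D)$ forces $\Delta(x,f(x))$ to fail to be a feedback vertex set, so the induced subgraph $D - \Delta(x,f(x))$ contains a cycle and hence a chordless cycle $C$; chordlessness in the induced subgraph is the same as chordlessness in $D$, and on $V(C)$ we have $x_v = f_v(x)$.

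The construction uses a proper coloring of $L(D)$ with $c := \chi'(D)$ colors, so that chordless cycles sharing a vertex of $D$ receive distinct colors. I identify $[2^c] \cong \mathbb{F}_2^c$ and write $(x_v)_k$ for the $k$-th bit of $x_v$. For each chordless cycle $C = (v_1 \to v_2 \to \cdots \to v_l \to v_1)$ of color $k$, I set
\[
f_{v_i}^k(x) = (x_{v_{i-1}})_k \quad \text{for } 2 \le i \le l, \qquad f_{v_1}^k(x) = 1 - (x_{v_l})_k,
\]
with all remaining bit-coordinates of $f$ identically $0$. Vertex-disjointness within each color class makes the definition of $f_v^k$ unambiguous, and every nontrivial dependency uses an arc of a chordless cycle of $D$, so $G(f) \subseteq D$. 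If some $x$ agreed with $f(x)$ on a chordless cycle of color $k$, chaining the bit-$k$ equations along the cycle would yield $(x_{v_l})_k = (x_{v_1})_k$ from the forward equalities and $(x_{v_1})_k = 1 - (x_{v_l})_k$ from the negated vertex, a contradiction.

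For the undirected refinement with $q = 2^{\Delta(D)}$, I exploit that the chordless directed cycles of the digraph associated to an undirected graph $D$ are precisely the $2$-cycles corresponding to edges of $D$. Because the obstruction is now purely edge-local, I do not need a globally consistent edge-coloring; instead, at each vertex $v$ I independently label its $\le \Delta(D)$ incident edges with distinct elements of $\{1,\dots,\Delta(D)\}$, writing $k_v^e$ for the label of edge $e$ at $v$. For each edge $e = \{u,v\}$, I orient it arbitrarily (say $u \to v$) and set $f_v^{k_v^e}(x) = (x_u)_{k_u^e}$ and $f_u^{k_u^e}(x) = 1 - (x_v)_{k_v^e}$, with all other bit-coordinates zero. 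Applied to each $2$-cycle, the same single-bit collapse yields $i(f) \ge \tau(D)$ with only $\Delta(D)$ bits per coordinate.

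The main obstacle I anticipate is guaranteeing global consistency of the bit-level construction: a vertex lying in many chordless cycles must not receive conflicting definitions of the same $f_v^k$. This is precisely where vertex-disjointness inside a color class of $L(D)$, and the pigeonhole bound $d(v) \le \Delta(D)$ in the undirected case, enter. Once that consistency is secured, the negation placed along each cycle's bit produces the contradiction for free, and the theorem follows.
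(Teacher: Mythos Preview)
Your proposal is correct and follows essentially the same route as the paper: partition the chordless cycles into $\chi'(D)$ vertex-disjoint classes, assign one bit-layer to each class, and place a single negation along every cycle so that the set $\{v : f_v(x)=x_v\}$ can never contain a chordless cycle and is therefore acyclic; in the undirected case, replace the global cycle-coloring by an independent local labeling of the incident edges at each vertex, exactly as the paper does via its $P$/$Q$ matrices. The only cosmetic slip is calling your key observation an ``equivalence'' --- you only argue (and only need) the implication that if no chordless cycle is ever fully fixed then $i(f)\ge\tau(D)$, which is precisely what the paper uses.
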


\begin{proof}
If $D$ is acyclic, then $i(D,q) = 0 = \tau(D)$ for all $Q$. Otherwise, by definition, we can partition the set of chordless cycles into $\chi = \chi'(D)$ parts 
$$
	\{C^{1,1}, \dots, C^{1,p_1}\}, \dots, \{C^{\chi,1}, \dots, C^{\chi,p_\chi}\}
$$
such that $C^{\alpha, i}, C^{\alpha,j}$ are disjoint for all $1 \le \alpha \le \chi$ and $1 \le i < j \le p_\alpha$. Denote each chordless cycle by
$$
	C^{\alpha, i} = (u_1^{\alpha, i}, \dots, u_{l_{\alpha, i}}^{\alpha, i}).
$$
Let $q = 2^\chi$: we then view $x_v \in [q]$ as $x_v = (x[v,1], \dots, x[v,\chi])$. Then consider the function $f \in F(D,q)$ where
\begin{align*}
	f_v(x) &= (f[v,1] (x), \dots, f[v,\chi] (x))\\
	f[v, \alpha] (x) &= \begin{cases}
		\neg x[ u_{l_{\alpha, i}}^{\alpha,i}, \alpha ] &\text{if } v = u_1^{\alpha, i}\\
		x[ u_{k-1}^{\alpha, i}, \alpha ] &\text{if } v = u_k^{\alpha, i}, k > 1\\
		0 &\text{otherwise}.
	\end{cases}
\end{align*}
We claim that $J$ is acyclic. Indeed, if $J$ contains the chordless cycle $C^{\alpha, i}$, then 
$$
	x[ u_1^{\alpha, i}, \alpha ] = x[ u_2^{\alpha, i}, \alpha ] = \dots = x[ u_{l_\alpha}^{\alpha, i}, \alpha ] = \neg x[ u_1^{\alpha, i}, \alpha ].
$$
Since $J$ is acyclic, its complement is a feedback vertex set, of cardinality at least $\tau(D)$.

If $D$ is undirected, for all $v \in [n]$ let $\inn(v) = \{u_1, \dots, u_{\ind(v)}\}$ sorted in increasing order. Then $P$ be the $n \times \Delta(D)$ matrix such that 
$$
	P(v,d) = \begin{cases}
	u_d &\text{if } 1 \le d \le \ind(v)\\
	0 &\text{if } d > \ind(v).
	\end{cases}
$$
Also, let $Q$ be the $n \times n$ matrix such that for all $v$ and $d \le \ind(v)$,
$$
	Q(P(v,d),v) = d,
$$
and all other entries are zero. Therefore,
$$
	P[P(v,d), Q(v, P(v,d))] = v, \quad Q\{ P(v,d), P[ P(v,d), Q(v, P(v,d)) ] \} = d.
$$

Let $q = 2^{\Delta(D)}$: we then view $x_v \in [q]$ as $x_v = (x[v,1], \dots, x[v, \Delta(D) ])$. Consider the function $f \in F(D,q)$ where
\begin{align*}
	f_v(x) &= (f[v,1] (x), \dots, f[v, \Delta(D) ] (x))\\
	f[v,d] (x) &= \begin{cases}
		 x[ P(v,d), Q(v, P(v,d) ) ] &\text{if } 1 \le P(v,d) < v\\
		 \neg x[ P(v,d), Q(v, P(v,d) ) ] &\text{if } v < P(v,d) \le n\\
		0 &\text{if } P(v,d) = 0.
	\end{cases}
\end{align*}
Again, for every $x$, denote the set of coordinates $j$ such that $f_j(x) = x_j$ as $J$. We claim that $J$ is independent. Indeed, if $J$ contains the edge $\{v, P(v,d)\}$ with $P(v,d) < v$ then
\begin{align*}
	x[v, d] &= f[v, d](x)\\
	&= x[ P(v,d), Q(v, P(v,d) ) ]\\
	&= f[ P(v,d), Q(v, P(v,d)) ] (x)\\
	&= \neg x[ P[ P(v,d), Q(v, P(v,d)) ], Q\{ P(v,d), P[ P(v,d), Q(v, P(v,d)) ] \} ]\\
	&= \neg x[v, d].
\end{align*}
\end{proof}


\begin{corollary} \label{cor:i_C5}
For any $k \ge 2$, 
$$
	k = i(C_{2k+1},2) < g(C_{2k+1},2) < k + 1/2 = g(C_{2k+1}) < k+1 = i(C_{2k+1}). 
$$
In particular, there exists a digraph $D$ such that for all $q$ large enough, $g(D,q) < i(D,q) = \tau(D)$.
\end{corollary}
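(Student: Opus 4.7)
My plan is to verify each step of the chain in turn and then obtain the \emph{moreover} assertion as a consequence. Recall that $C_{2k+1}$ is the undirected cycle on $2k+1$ vertices (viewed as a symmetric digraph), so $\Delta(C_{2k+1}) = 2$ and $\tau(C_{2k+1}) = k+1$ (a feedback vertex set must hit every $2$-cycle, i.e.\ it must be a vertex cover of the underlying cycle, whose minimum vertex cover has size $k+1$). The only genuine work is the rightmost equality $i(C_{2k+1}) = k+1$; everything else reduces to bookkeeping built on results already recorded in the paper.

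For the leftmost equality, Proposition~\ref{prop:i_increasing} gives $i(C_{2k+1}, 2) = s(C_{2k+1}, 2)$, and the fact recalled in Section~\ref{sec:winkler} (from \cite{BHKL08}) that $s(D,2)$ equals the maximum matching number of an undirected graph $D$ identifies this with $k$, since the maximum matching number of $C_{2k+1}$ is $k$. The middle chain $k < g(C_{2k+1}, 2) < k + 1/2 = g(C_{2k+1})$ assembles the computation of the guessing number of odd cycles from \cite{Rii07, CM11} cited in Section~\ref{sec:guessing}; note in particular that the strict inequality $g(C_{2k+1}, 2) < k + 1/2$ can be argued cleanly from $g(C_{2k+1}, 2) \le g(C_{2k+1}) = k + 1/2$ together with the integrality observation that $g(C_{2k+1}, 2) = \log_2 |\fix(f)|$ for some integer $|\fix(f)|$, while $2^{k+1/2}$ is irrational, which rules out equality. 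The closing inequality $g(C_{2k+1}) < k + 1$ is now just arithmetic.

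For $i(C_{2k+1}) = k + 1$: since $C_{2k+1}$ is undirected with $\Delta(C_{2k+1}) = 2$, apply Theorem~\ref{th:i=tau} with $q = 2^{\Delta(C_{2k+1})} = 4$ to obtain $i(C_{2k+1}, 4) = \tau(C_{2k+1}) = k+1$. Combined with the upper bound $i(D,q) \le \tau(D)$ from Proposition~\ref{prop:bounds_i} and the monotonicity $i(D,q) \le i(D,q+1)$ from Proposition~\ref{prop:i_increasing}, this forces $i(C_{2k+1}, q) = k+1$ for every $q \ge 4$, and hence $i(C_{2k+1}) = k+1$.

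Finally, the \emph{moreover} statement is obtained by taking $D = C_{2k+1}$ for any $k \ge 2$: the preceding step gives $i(D,q) = \tau(D) = k+1$ for all $q \ge 4$, whereas $g(D,q) \le g(D) = k + 1/2 < k + 1 = i(D,q)$ for every $q$, producing the required strict separation. The main obstacle in the whole argument is the appeal to Theorem~\ref{th:i=tau} to pin down $i(C_{2k+1})$ at the full feedback-vertex-set bound; once that theorem is in hand, all four (in)equalities in the chain follow from previously recorded facts plus the integrality remark above.
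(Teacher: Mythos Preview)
Your proof is correct and matches the paper's intended reasoning: the corollary is stated without proof in the paper, and your argument spells out exactly the routine steps one is meant to assemble---the identification $i(C_{2k+1},2)=s(C_{2k+1},2)=k$ via Proposition~\ref{prop:i_increasing} and the matching number, the cited values of $g(C_{2k+1},q)$ and $g(C_{2k+1})$ from \cite{Rii07,CM11}, and the crucial appeal to Theorem~\ref{th:i=tau} with $q=2^{\Delta(C_{2k+1})}=4$ to obtain $i(C_{2k+1})=\tau(C_{2k+1})=k+1$. The irrationality remark you use to get strictness in $g(C_{2k+1},2)<k+1/2$ is a clean way to make that step explicit.
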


\begin{theorem} \label{th:stability_upper_bound}
For any $D$ and $q$ and $m \ge 1$, let $Q(m) = 2 + \sum_{a=1}^m a^a$, then
\begin{align*}
	s(D,q) &\le \frac{\tau(D)}{\lfloor (q-1)^{1/\tau(D)} \rfloor},\\
	s(D,Q(m)) &\le \tau(D) - m,
\end{align*}
and in particular $s(D, Q(\tau(D))) = 0$.
\end{theorem}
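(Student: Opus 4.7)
The plan is to bound $s(f)$ for any $f\in F(D,q)$ by finding, inside a structured set of candidate inputs, one with few correct guesses. As in the proof of Proposition~\ref{prop:bounds_i}, let $I$ be a minimum feedback vertex set of $D$ with $|I|=\tau=\tau(D)$, order $V\setminus I=\{j_1,\ldots,j_{n-\tau}\}$ topologically with respect to $D-I$, and set
\[
X:=\{x\in[q]^n:f_j(x)\neq x_j\text{ for every }j\notin I\}.
\]
Extending any $x_I\in[q]^I$ along the topological order with $q-1$ choices at each step gives $|X_{x_I}|=(q-1)^{n-\tau}$ and $|X|=q^\tau(q-1)^{n-\tau}$. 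For every $x\in X$, $n-\dH(x,f(x))=\tau-\dH(x_I,f_I(x))$, so
\[
s(f)\;\le\;\min_{x\in X}\bigl(\tau-\dH(x_I,f_I(x))\bigr).
\]

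For the first inequality, set $k=\lfloor(q-1)^{1/\tau}\rfloor$ and argue by contradiction. Suppose every $x\in X$ satisfies $\dH(x_I,f_I(x))<\tau-\lfloor\tau/k\rfloor$, so that $E(x_I):=\{f_I(x):x\in X_{x_I}\}$ is contained in the Hamming ball of radius $r:=\tau-\lfloor\tau/k\rfloor-1$ around $x_I$ for every $x_I\in[q]^I$. The plan is to derive a volume contradiction from $k^\tau\le q-1$: projecting each $X_{x_I}$ onto the coordinates $\inn(I)\cap(V\setminus I)$ and using that distinct values of $x_I$ forcing the same $f_I(x)$ impose an ``almost disjoint'' condition on these projections (the same idea that, for $\tau=1$, disposes of the bad case where every extension verifies $f_{i_0}(x)=x_{i_0}$), one shows that the total size $q^\tau V_H(q,\tau,r)$ of the ``ball union'' is strictly smaller than the forced sum of projected sizes controlled by $(q-1)^{n-\tau}$ when $k^\tau\le q-1$. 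Hence some $x\in X$ satisfies $\dH(x_I,f_I(x))\ge\tau-\lfloor\tau/k\rfloor$, giving $s(f)\le\lfloor\tau/k\rfloor\le\tau/k$.

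For the second inequality, I proceed by induction on $m\ge 0$. The base $m=0$ is $s(D,Q(0))=s(D,2)\le\tau$, which follows from $s(D,2)=i(D,2)$ (Proposition~\ref{prop:i_increasing}) together with $i(D,2)\le\tau$ (Proposition~\ref{prop:bounds_i}). For the inductive step, let $f\in F(D,Q(m+1))$ with $Q(m+1)=Q(m)+(m+1)^{m+1}$. Partition the alphabet as $[Q(m+1)]=A\sqcup B$ with $|A|=Q(m)$ and $|B|=(m+1)^{m+1}$, and apply the projection $x'_i=\min(x_i,Q(m)-1)$ of Proposition~\ref{prop:i_increasing} to obtain a compressed $\tilde f\in F(D,Q(m))$. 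The induction hypothesis produces $\tilde x\in[Q(m)]^n$ with at most $\tau-m$ correct guesses; write $C=\{j\in I:\tilde f_j(\tilde x)=\tilde x_j\}$ so $|C|\le\tau-m$. The $(m+1)^{m+1}$ symbols in $B$---equal in number to the functions $[m+1]\to[m+1]$, enumerating the possible ``response tables'' of $f$ on the at most $m+1$ endangered positions of $C$---provide enough slack to re-label $\tilde x$ on a single coordinate $j^\star\in C$ by a value from $B$ so that $f_{j^\star}(x)\ne x_{j^\star}$ in the original $f$, and to propagate that change along the topological order on $V\setminus I$ without creating any new correct coordinate outside $C$. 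The resulting $x\in[Q(m+1)]^n$ has at most $\tau-m-1$ correct guesses, completing the induction. In particular, $m=\tau(D)$ gives $s(D,Q(\tau(D)))\le 0$, hence $=0$.

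The main obstacle will be the two key combinatorial arguments. For the first bound, the volume/projection argument has to carefully use the loopless structure of $G(f)\subseteq D$ (to rule out the ``$F=\mathrm{id}$'' case on $[q]^I$) and convert $k^\tau\le q-1$ into a strict inequality between $|X|$ and the admissible ball covering, which requires precision in counting. For the second bound, the delicate point is showing that relabelling a coordinate of $\tilde x$ with a symbol from $B$ never accidentally matches $f$ on another position: the $(m+1)^{m+1}$ ``budget'' is exactly what is needed to enumerate all possible sequences of side effects along the topological order and to exhibit one that avoids every bad coincidence.
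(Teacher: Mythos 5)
Both halves of your argument defer the essential combinatorial content to phrases like ``one shows that'' and ``provides enough slack,'' and in both cases the deferred step is exactly where the theorem lives; as written there is a genuine gap. The idea your proposal is missing is the paper's key claim: for \emph{any} subset $X\subseteq[q]^I$ of assignments to the feedback vertex set with $|X|<q$, there is a \emph{single} assignment $y^{(X)}$ to $J=V\setminus I$ (built greedily along the topological order of $D-I$, avoiding at each step the at most $|X|<q$ values $f_{j}$ can take) that makes every player in $J$ wrong for every $x\in X$ simultaneously. This uniformity is what turns $f_I(\cdot,y^{(X)})$ into a function of $x_I$ alone, i.e.\ a genuine hat game on $D[I]$. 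For the first bound one takes $X=[p]^\tau$ with $p=\lfloor(q-1)^{1/\tau}\rfloor$ (so $|X|=p^\tau\le q-1$) and invokes $s(D[I],p)\le s(K_\tau,p)\le\tau/p$; for the second one takes explicit sets $X_l$ of size $Q(l)-1$ and inducts on the coordinate index $l$ inside $I$, not on the alphabet size. In your first argument you instead let the $J$-part of $x$ vary with $x_I$ (your $X$ is the set of all ``$J$-wrong'' configurations), which destroys the reduction: $f_I(x)$ still depends on $x_J$, your $X$ is not a product set in the coordinate being double-counted, and the claimed ``volume/projection'' contradiction is never actually derived. I do not see how to complete it along those lines.

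Your induction on $m$ for the second bound has a concrete failure point: after relabelling $x_{j^\star}$ for $j^\star\in C\subseteq I$ (and re-choosing the $J$-coordinates), the inputs to $f_v$ change for every $v$ with $j^\star\in\inn(v)$, and in particular a coordinate $v\in I\setminus C$ that was previously \emph{wrong} may become \emph{right}, which would leave the number of correct guesses at $\tau-m$ rather than $\tau-m-1$. Your proposed fix --- enumerating ``side effects along the topological order'' with the $(m+1)^{m+1}$ spare symbols --- cannot work as stated, because the problematic side effects occur inside $I$, and $D[I]$ carries all the cycles of $D$, so there is no topological order to propagate along there. (There is also a smaller issue: your induction hypothesis only gives some $\tilde x$ with at most $\tau-m$ correct guesses, not one whose correct guesses all lie in $I$, so the set $C$ need not capture them all.) The count $Q(m)=2+\sum_a a^a$ is not an alphabet-splitting budget; in the paper it arises as $|X_m|+1$, where $X_l=A_l\cup B_l$ with $|A_l|=l^l$ records the possible restrictions of the strategy to the first $l$ feedback coordinates, and the condition $|X_m|<q$ is precisely what lets the single $y^{(X_m)}$ of the claim above be chosen.
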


\begin{proof}
Let $\tau = \tau(D)$, $I$ be a minimum feedback vertex set of $D$ and $J = V \setminus I = \{j_1, \dots, j_{n-\tau} \}$ in topological sort. Let $f \in F(D,q)$ with $s(f) = s(D,q)$.

\begin{claim*} \label{claim:1}
For any set $X \subseteq [q]^\tau$ such that $|X| < q$, there exists $y^{(X)} \in [q]^{n-\tau}$ such that 
$$
	\dH(y^{(X)}, f_J(x, y^{(X)}) ) = n- \tau
$$
for all $x \in X$.
\end{claim*}

\begin{subproof}
Recursively define $y = y^{(X)} \in [q]^{n-\tau}$ such that
\begin{align*}
	y_{j_1} &\notin f_{j_1}(X)\\
	y_{j_2} &\notin f_{j_2}(X, y_{j_1})\\
	&\vdots\\
	y_{j_{n-\tau}} &\notin f_{j_{n-\tau}}(X,y_{j_1}, \dots, y_{j_{n-\tau-1}}).
\end{align*}
\end{subproof}

Firstly, let $p = \lfloor (q-1)^{1/\tau} \rfloor$, so that $q \ge p^\tau + 1$ and consider the set $X = [p]^\tau \subseteq [q]^\tau$. Then we claim that there exists $x \in X$ such that 
$$
	\dH(x, f_I(x,y^{(X)})) \ge \tau - s(D[I],p).
$$
Indeed, otherwise the function $g \in F(D[I], p)$ defined as $g(x) = \max\{ p-1, f_I( x, y^{(X)} ) \}$ has stability greater than $s(D[I],p)$, which is a contradiction. Thus, $s(D,q) \le s(D[I],p) \le s(K_\tau, p) \le \frac{\tau}{p}$.

Secondly, let $q = Q(m)$ and consider the sets $X_l \subseteq [l+1]^\tau \subseteq [q]^\tau$ defined recursively by $X_1 = \{(0,0,\dots,0) , (1,0,\dots,0) \}$ and for all $2 \le l \le m$,
\begin{align*}
	A_l &= \{x \in [q]^\tau: x_{1,\dots,l} \in [l]^\tau , x_{l+1} = \dots = x_\tau = 0\},\\
	B_l &= \{x \in [q]^\tau: (x_{1,\dots,l-1}, 0, \dots, 0) \in X_{l-1}, x_l = l, x_{l+1} = \dots = x_\tau = 0\},\\
	X_l &= A_l \cup B_l.
\end{align*}
Then $|X_l| = Q(l) - 1$ and we claim that for all $1 \le l \le m$, there exists $x \in X_l$ such that 
$$
	\dH(x_{1,\dots,l}, f_{1,\dots,l}(x, y^{(X_m)}) ) = l.
$$
We prove it by induction on $l$; the case $l = 1$ is clear. Suppose it holds for $l-1$ but not for $l$. Note that in $X_l$, the value of $(x_{l+1}, \dots, x_\tau)$ is fixed, so we write $f_i(x_{1,\dots,l})$ for any $1 \le i \le l$. Then in $A_l$, there is always one player from $1$ to $l$ who guesses correctly, hence for every value of $x_{1,\dots,l-1} \in [l]^\tau$, there exists $x_l$ such that $x_{1,\dots,l}$ is guessed correctly by player $l$. In other words, $f_l(x_{1,\dots,l-1}) \in [l]$ for any $x_{1,\dots,l-1} \in [l]^{l-1}$. Now in $B_l$, the players $1$ to $l-1$ cannot always guess correctly, by induction hypothesis. Thus, there exists $z \in [l]^{l-1}$ such that $f_l(z) = l$, which is a contradiction. Thus, $s(D,q) \le \tau - m$.
\end{proof}

\begin{corollary}
If $\tau(D) = 2$, then $s(D, 7) = 0$. Also, for any $D$, $s(D,3) \le \tau(D) - 1$, i.e. the $\tau(D)$ upper bound on the stability can  only be reached for $q=2$.
\end{corollary}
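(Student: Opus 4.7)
The plan is to read off both claims as direct specializations of Theorem \ref{th:stability_upper_bound} once the quantity $Q(m) = 2 + \sum_{a=1}^m a^a$ is evaluated for the two relevant small values of $m$. The only numerical input required is $Q(1) = 2 + 1^1 = 3$ and $Q(2) = 2 + 1^1 + 2^2 = 7$.

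For the first claim, I would apply Theorem \ref{th:stability_upper_bound} with $m = \tau(D) = 2$; the ``in particular'' clause then reads $s(D, Q(\tau(D))) = s(D, 7) \le \tau(D) - 2 = 0$, which is the desired equality. For the second claim, I would apply the theorem with $m = 1$ to obtain $s(D, Q(1)) = s(D, 3) \le \tau(D) - 1$. To convert this into the statement that the bound $s(D, q) \le \tau(D)$ is attained only when $q = 2$, I would combine it with the monotonicity $s(D, q) \ge s(D, q+1)$ from Proposition \ref{prop:i_increasing}: for every $q \ge 3$ we then get $s(D, q) \le s(D, 3) \le \tau(D) - 1 < \tau(D)$, whereas $s(D, 2) = \tau(D)$ is already known to be attainable (e.g.\ on an even cycle, where a perfect matching realizes the bound).

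There is no substantive obstacle here beyond the bookkeeping on $Q(m)$. The only minor care point is the implicit assumption $\tau(D) \ge 1$ in the second assertion, since for $\tau(D) = 0$ the digraph is acyclic and $s(D, q) = 0$ trivially for all $q$, so the claimed inequality would need to be read as a vacuous instance of the general bound $s(D,q) \le \tau(D)$ rather than as a strict consequence of Theorem \ref{th:stability_upper_bound}.
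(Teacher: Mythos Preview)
Your proposal is correct and is exactly the argument the paper has in mind: the corollary is stated without proof immediately after Theorem \ref{th:stability_upper_bound} because both assertions are the specializations $m=2$ (with $\tau(D)=2$, giving $Q(2)=7$) and $m=1$ (giving $Q(1)=3$) of the bound $s(D,Q(m))\le\tau(D)-m$. Your additional remark invoking the monotonicity of Proposition \ref{prop:i_increasing} to extend the $q=3$ bound to all $q\ge 3$, and your observation about the $\tau(D)=0$ edge case, are both appropriate and do not depart from the paper's intended reading.
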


%

\subsection{Relation with the guessing number} \label{sec:i_v_g}

We can relate the guessing number of a digraph with its stability and instability.

\begin{theorem} \label{th:i_v_g}
For every digraph $D$ we have 
\begin{align*}
	g(D,q) &\geq \tau(D) - \log_q \left[ q^\tau(D) - \VH(q, \tau(D), i(D,q)-1) \right],\\
	g(D,q) &\ge \tau(D) - \log_q \VH(q, \tau(D), \tau(D) - s(D,q)).
\end{align*}
\end{theorem}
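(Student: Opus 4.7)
The plan is to mirror the proof that $g(D,q)\le\tau(D)$ by lifting a max-fibre shift of a well-chosen FDS on the feedback vertex set. First, fix a minimum feedback vertex set $I\subseteq V$ of $D$ with $|I|=\tau:=\tau(D)$, and order $J:=V\setminus I$ topologically as $j_1,\dots,j_{n-\tau}$ in $D[J]$. For the first inequality take $f\in F(D,q)$ attaining $i(D,q)$; for the second take $f\in F(D,q)$ attaining $s(D,q)$.

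For each $y\in[q]^I$, define the canonical extension $\hat y\in[q]^n$ by $\hat y_I=y$ and recursively $\hat y_{j_k}:=f_{j_k}(\hat y)$; this is well defined because $f_{j_k}$ only depends on in-neighbours of $j_k$, all of which lie in $I\cup\{j_1,\dots,j_{k-1}\}$ by the topological order. By construction $f_J(\hat y)=\hat y_J$, so the shift $\phi(y):=f(\hat y)-\hat y\in[q]^n$ is supported on $I$ and can be viewed as a map $\phi:[q]^I\to[q]^I$. Crucially, $f-(c,0_J)\in F(D,q)$ for every $c\in[q]^I$ and $\hat y\in\fix(f-(c,0_J))$ iff $\phi(y)=c$; so pigeonholing the $q^\tau$ values of $y$ over the fibres of $\phi$ gives the common bound
\[
g(D,q)\ \ge\ \tau(D)-\log_q|\Ima(\phi)|.
\]

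It remains to bound $|\Ima(\phi)|$ using $w_{\mathrm{H}}(\phi(y))=\dH(\hat y,f(\hat y))$. For the first inequality, $w_{\mathrm{H}}(\phi(y))\ge i(f)=i(D,q)$ for every $y$, so $\Ima(\phi)$ misses the Hamming ball of radius $i(D,q)-1$ in $[q]^{\tau(D)}$, yielding $|\Ima(\phi)|\le q^{\tau(D)}-\VH(q,\tau(D),i(D,q)-1)$ and the first bound. For the second inequality, $w_{\mathrm{H}}(\phi(y))\le\tau(D)-s(D,q)$, so $\Ima(\phi)$ lies in the Hamming ball of radius $\tau(D)-s(D,q)$, giving $|\Ima(\phi)|\le\VH(q,\tau(D),\tau(D)-s(D,q))$ and the second bound.

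I expect the main obstacle to be the weight bound $w_{\mathrm{H}}(\phi(y))\le\tau(D)-s(D,q)$ in the second inequality: the stability hypothesis only directly gives $\dH(\hat y,f(\hat y))\le n-s(D,q)$, which caps the shift weight at $\min(\tau(D),n-s(D,q))$ rather than the claimed tighter $\tau(D)-s(D,q)$. Closing this gap is the delicate point of the second proof; a plausible route is to exploit that the $n-\tau(D)$ automatic agreements on $J$ coming from canonicity already absorb the stability surplus outside $I$, so that the $s(D,q)$ guaranteed agreements are forced to fall on $I$, possibly after choosing $I$ or $f$ with additional structure.
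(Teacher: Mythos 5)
Your first inequality is proved correctly and by essentially the same argument as the paper: lift each $y\in[q]^I$ to a canonical extension fixed on $J$, observe that the resulting shift vector is supported on $I$ and has weight at least $i(D,q)$, and pigeonhole over the at most $q^{\tau(D)}-\VH(q,\tau(D),i(D,q)-1)$ possible shifts before translating $f$ on $I$.

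For the second inequality, however, the gap you flag is real and your proposed repair points in the wrong direction. With the canonical extension satisfying $f_J(\hat y)=\hat y_J$, the $n-\tau(D)$ agreements on $J$ do not ``absorb'' anything: stability only caps the total disagreement by $n-s(f)$, and since all of it may sit on $I$ you only get $\wH(\phi(y))\le\min(\tau(D),\,n-s(f))$, which is weaker than $\tau(D)-s(D,q)$ whenever $s(D,q)>0$. The paper's fix is the opposite of what you suggest: choose the extension so that \emph{every} coordinate of $J$ \emph{disagrees}. Concretely, the recursive construction that sets $\hat y_{j_k}:=f_{j_k}(\hat y)-a_{j_k}$ lets you prescribe the difference $f(\hat y)-\hat y$ arbitrarily on $J$; taking $a_J=(1,\dots,1)$ forces $n-\tau(D)$ disagreements on $J$, so the bound $\dH(\hat y,f(\hat y))\le n-s(f)$ leaves at most $\tau(D)-s(f)$ disagreements for the $I$-part $z_I$ of the shift. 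Hence the image of $y\mapsto z_I$ lies in the Hamming ball of radius $\tau(D)-s(D,q)$ in $[q]^{\tau(D)}$, and the pigeonhole plus the translation $f''_v=f_v-b_v$ on $I$ and $f''_v=f_v-1$ on $J$ (which stays in $F(D,q)$ and fixes every lifted point in the large fibre) completes the proof. With that single change your argument becomes the paper's.
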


\begin{proof}
We begin with an important property of acyclic sets.

\begin{claim*} \label{claim:2}
Let $I$ be a feedback vertex set of $D$ and $J = V \setminus I$. Then for any $f \in F(D,q)$, $x \in [q]^I$, $a \in [q]^J$, there exists $x^{(a)} \in [q]^n$ such that $x^{(a)}_I = x$ and $f(x^{(a)}) - x^{(a)} = a$.
\end{claim*}

\begin{subproof}
We sort $J$ in topological order $J = \{j_1, \dots, j_k\}$ and we construct $x^{(a)}_J$ recursively. We have
\begin{align*}
	x^{(a)}_{j_1} &= f_{j_1}(x_I) - a_{j_1}\\
	x^{(a)}_{j_2} &= f_{j_2}(x_I, x^{(a)}_{j_1}) - a_{j_2}\\
	&\vdots\\
	x^{(a)}_{j_k} &= f_{j_1}(x_I, x^{(a)}_{j_1}, \dots, x^{(a)}_{j_{k -1}}) - a_{j_k}.
\end{align*}
\end{subproof}

Let $I$ be a feedback vertex set of $D$ of size $\tau = \tau(D)$ and $J = V \setminus I$. Let $f \in F(D,q)$ with maximal instability,  and 
$$
	\mathcal{Y} = \{ y \in [q]^\tau : \wH(y) \ge i(f) \}.
$$
By the Claim, for every $x\in [q]^I$ there exists a unique point $x'\in [q]^n$ such that $x'_I = x$ and $y = f(x') - x'$ satisfies $y_J  = (0, \dots, 0)$, hence $y_I \in \mathcal{Y}$. The function $x\mapsto \delta(x) = y_I$ is thus a function from $[q]^I$ to $\mathcal{Y}$, hence there exists $a \in \mathcal{Y}$ such that 
\[
	|\delta^{-1}(a)| \ge \frac{|[q]^I|}{|\mathcal{Y}|} = \frac{q^\tau}{q^\tau - \VH(q, \tau, i(D,q)-1)}. 
\]
Consider then the FDS $f' \in F(D,q)$ defined as
$$
	f'_v(x) = \begin{cases}
		f_v(x) - a_v &\text{if } v \in I\\
		f_v(x) &\text{if } v \in J.
	\end{cases}
$$
For every $x\in \delta^{-1}(a)$, $x'$ is a fixed point of $f'$. Since $x\mapsto x'$ is an injection, $g(f') \ge \log_q |\delta^{-1}(a)|$, which combined with the above, proves the result.

The proof for the stability is similar. Let $f \in F(D,q)$ with maximal instability, $I$ and $J$ as above, and
$$
	\mathcal{Z} = \{ z \in [q]^\tau : \wH(z) \le \tau - s(f) \}.
$$
By the Claim, for every $x\in [q]^I$ there exists a unique point $x''\in [q]^n$ such that $x''_I = x$ and $z = f(x'') - x''$ satisfies $z_J  = (1, \dots, 1)$, hence $z_I \in \mathcal{Z}$. The function $x\mapsto \epsilon(x) = z_I$ is thus a function from $[q]^I$ to $\mathcal{Z}$, and hence there exists $b \in \mathcal{Z}$ such that 
\[
	|\epsilon^{-1}(b)| \ge \frac{|[q]^I|}{|\mathcal{Z}|} = \frac{q^\tau}{\VH(q, \tau, \tau - s(D,q))}. 
\]
Consider then $f' \in F(D,q)$ defined as
$$
	f''_v(x) = \begin{cases}
		f_v(x) - b_v &\text{if } v \in I\\
		f_v(x) - 1 &\text{if } v \in J.
	\end{cases}
$$
For every $x\in \epsilon^{-1}(b)$, $x''$ is a fixed point of $f''$. Since $x\mapsto x''$ is an injection, $g(f'') \ge \log_q |\epsilon^{-1}(b)|$, which combined with the above, proves the result.
\end{proof}

\begin{corollary} \label{cor:i_v_g}
If $i(D,2) = s(D,2) = \tau(D)$, then $g(D,2) = \tau(D)$.
\end{corollary}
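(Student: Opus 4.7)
The plan is to invoke Theorem \ref{th:i_v_g} directly at $q = 2$ with the instability bound, and observe that the binomial correction term collapses to a single vertex of the Hamming cube when $i(D,2) = \tau(D)$.

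More concretely, write $\tau = \tau(D)$ and apply the first inequality of Theorem \ref{th:i_v_g} at $q = 2$:
$$
g(D,2) \geq \tau - \log_2\!\bigl[\, 2^{\tau} - \VH(2, \tau, i(D,2) - 1) \,\bigr].
$$
Under the hypothesis $i(D,2) = \tau$, the relevant Hamming ball has radius $\tau - 1$, so
$$
\VH(2, \tau, \tau - 1) = \sum_{d=0}^{\tau - 1} \binom{\tau}{d} (2 - 1)^d = 2^{\tau} - 1,
$$
whence $2^{\tau} - \VH(2, \tau, \tau - 1) = 1$ and therefore $g(D,2) \geq \tau - \log_2 1 = \tau$. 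Combined with the universal upper bound $g(D,2) \leq \tau(D)$ from Proposition \ref{prop:bounds_i} (applied to the guessing number, or equivalently from the definition of solvability recalled in Section \ref{sec:guessing}), this forces $g(D,2) = \tau(D)$.

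There is essentially no obstacle: the two hypotheses $i(D,2) = \tau(D)$ and $s(D,2) = \tau(D)$ are in fact equivalent by Proposition \ref{prop:i_increasing}, so either one suffices, and one could alternatively use the second inequality of Theorem \ref{th:i_v_g} with $s(D,2) = \tau$, where the ball $\VH(2, \tau, \tau - s(D,2)) = \VH(2, \tau, 0) = 1$ again collapses to a point. The substantive content of the corollary is therefore packaged entirely inside Theorem \ref{th:i_v_g}, and the corollary itself amounts to the observation that the covering/packing slack in that theorem vanishes exactly at the extremal binary instance.
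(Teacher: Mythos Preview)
Your argument is correct and is exactly the intended derivation: the paper states the corollary without proof immediately after Theorem~\ref{th:i_v_g}, and plugging $q=2$, $i(D,2)=\tau(D)$ into the first inequality (or $s(D,2)=\tau(D)$ into the second) collapses the logarithmic term to zero. One minor remark: Proposition~\ref{prop:bounds_i} bounds the instability, not the guessing number, so the upper bound $g(D,2)\le\tau(D)$ should be cited from Section~\ref{sec:guessing} alone, as you do in parentheses.
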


The implication does not hold for all $q$, as seen from Corollary \ref{cor:i_C5}.

%

\section{Linear and affine (in)stability} \label{sec:linear}

\subsection{Digraphs with high affine stability and instability} \label{sec:high_i}

\begin{theorem} \label{th:high_i}
For any $\gamma \ge 3$, and any $\epsilon > 0$, there exists $D$ with girth $\gamma$ such that
\begin{align*}
	i_{\affine}(D, 2) &> (1 - \epsilon) \frac{n}{2},\\
	i_{\affine}(D) = \tau(D) &> (1 - \epsilon) n,\\
	s_{\affine}(D, \lceil \epsilon^{-1} \rceil) &> 0.	
\end{align*}
\end{theorem}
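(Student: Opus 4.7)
I would prove the three inequalities through a single code-theoretic construction, building on the guessing dimension framework of Section \ref{sec:guessing_code}. My candidate for $D$ is a sparse digraph with girth $\gamma$, feedback vertex number $\tau(D) > (1-\epsilon)n$, and linear $q$-solvability; the natural starting point is the strong product $\vec{C}_\gamma^k$ from Section \ref{sec:background}, since choosing $k$ large enough that $(1-1/\gamma)^k < \epsilon$ already yields $\tau(D)/n = 1 - (1-1/\gamma)^k > 1 - \epsilon$, girth exactly $\gamma$, and (by \cite{GR11}) a linear solution $x \mapsto xM$ with $\mathrm{rank}(M - I) = n - \tau(D)$ over every prime power $q$. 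If the binary bound fails for this specific graph, I would augment $D$ with additional structure (a matching-like pairing or an undirected LDPC-style overlay) whose addition does not decrease the girth below $\gamma$ or the feedback vertex number appreciably.

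\textbf{Instability and stability via covering radius.} The key observation is that for any affine $f(x) = xM + y$ in $F(D,q)$, the instability rewrites as $i(f) = \min_{v \in V_M + y} \wH(v)$ where $V_M := \Ima(M - I)$, and maximizing over $y$ gives $i_{\affine}(D,q) = \max_M R(V_M)$, the maximum covering radius over admissible $M$. The trivial upper bound is $R(V_M) \le n - \dim V_M = \tau(D)$. To attain it, I would exhibit $V_M$ as an MDS code of dimension $n - \tau(D)$: for the single cycle $\vec{C}_\gamma$, the linear solution yields $V_0 \subseteq \mathrm{GF}(q)^\gamma$ equal to a single-parity-check code of covering radius $1 = \tau(\vec{C}_\gamma)$, and the $k$-fold strong product gives $V_M$ as a tensor-type code on which I would exhibit an explicit coset leader of weight $\tau(D)$ supported on a feedback vertex set. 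This shows $i_{\affine}(D) = \tau(D)$. For the binary bound $i_{\affine}(D,2) > (1-\epsilon)n/2$: by Proposition \ref{prop:i_increasing} together with the classical matching-based affine construction (using $f_u(x) = x_v+1$, $f_v(x) = x_u+1$ on each matched pair), one has $i_{\affine}(D,2) = s_{\affine}(D,2) \ge M(D)$ for the undirected part of $D$; I would ensure the construction has a matching $M(D) > (1-\epsilon)n/2$. For $s_{\affine}(D,\lceil\epsilon^{-1}\rceil) > 0$: take the linear $f$ (so $y = 0$ and $C_f = V_M$), giving $s(f) = n - \max_{v \in V_M} \wH(v)$; since $\dim V_M = n - \tau(D) \le \epsilon n$ and its generators have support dictated by the sparse structure of $D$, I would verify that no codeword has full weight $n$ at $q = \lceil \epsilon^{-1}\rceil$, yielding $s(f) \ge 1$.

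\textbf{Main obstacle.} The hardest step is the covering-radius identity $R(V_M) = \tau(D)$ for the tensor-product image arising from $\vec{C}_\gamma^k$'s linear solution. Covering radii of tensor-product codes are notoriously delicate: the trivial lower bound multiplies factorwise only under favourable alignment of coset leaders, so the argument must exhibit a specific coset whose leader weight telescopes to $\tau(D) = \gamma^k - (\gamma-1)^k$, using that the single-cycle coset leader sits on the unique feedback vertex of $\vec{C}_\gamma$. A secondary obstacle is reconciling all three bounds simultaneously with a single $D$: the general-$q$ bound wants an algebraic structure forcing $q$ large, while the binary bound wants a matching-rich (hence undirected) structure — meshing the two while maintaining girth $\gamma$ and $\tau > (1-\epsilon)n$ is where the construction has to be designed most carefully.
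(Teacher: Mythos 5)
Your setup is the right one --- the same graph family $D = \vec{C}_\gamma^k$, the same reduction of (in)stability to the covering radius and remoteness of the affine guessing code $C_f = \Ima(M-I) + y$, and the same use of linear solvability to pin down $\dim \Ima(M-I) = n - \tau(D)$ --- but you are missing the one idea that makes the proof go through, and your substitutes for it fail. The paper never computes the covering radius of the tensor-type code $\Ima(M-I)$; it only uses that this code has \emph{size} $q^{n-\tau(D)}$ (via Theorem \ref{th:g+l=n}, $l_{\affine}(D,q) = n - g_{\linear}(D,q) = n - \tau(D)$) and then applies the purely counting-based sphere-covering bound $|C|\,\VH(q,n,\cov(C)) \ge q^n$ and its remoteness counterpart $|C|\,(q^n - \VH(q,n,\rem(C)-1)) \ge q^n$ (Lemma \ref{lem:i_v_l}). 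These give $\VH(q,n,i_{\affine}(D,q)) \ge q^{\tau(D)}$, which at $q=2$ yields $H_2(i/n) \ge \tau(D)/n > H_2((1-\epsilon)/2)$ and hence the binary bound, and for large $q$ yields $i \ge \tau(D) - n\log_q 2$, which forces $i_{\affine}(D,q) = \tau(D)$ once $q$ is large (the integer $i$ is squeezed against the upper bound $\tau(D)$). So the ``main obstacle'' you identify --- an exact coset-leader/MDS analysis of the product code --- is not only delicate but unnecessary, and indeed $\Ima(M-I)$ is nowhere near MDS for $k \ge 2$; moreover the theorem only asserts $i_{\affine}(D) = \tau(D)$ as a supremum over $q$, not $R(V_M) = \tau(D)$ at a fixed $q$.

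Two of your three bounds would actually fail as proposed. For the binary bound, $\vec{C}_\gamma^k$ is an oriented graph (the strong product of digon-free digraphs is digon-free), so its ``undirected part'' is empty, the matching bound $s(D,2)=M$ does not apply, and the only disjoint-cycle lower bound available is $\nu(D) = \gamma^{k-1} = n/\gamma$, far below $(1-\epsilon)n/2$; your fallback of overlaying a matching or an undirected LDPC structure introduces digons, i.e.\ directed cycles of length $2$, destroying the girth-$\gamma$ requirement. For the stability bound, restricting to the linear $f$ (i.e.\ $y=0$) and hoping that $\Ima(M-I)$ has no full-weight codeword is likely false --- already the length-$\gamma$ single-parity-check code contains full-weight words over most fields --- and the affine translate $y$ is essential: the remoteness bound guarantees \emph{some} $y$ with $\max_{c}\dH(c,y) < n$ precisely because $q^{n-\tau(D)}(q-1)^n < q^n$ when $\tau(D) > n\log_q(q-1)$, which is a condition on $k$ you would also need to impose. (A last small point: $\lceil \epsilon^{-1}\rceil$ need not be a prime power, so one proves $s_{\affine}(D,q)>0$ for a prime power $q \ge \lceil \epsilon^{-1}\rceil$ and invokes the monotonicity $s(D,q) \ge s(D,q+1)$ of Proposition \ref{prop:i_increasing}.)
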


The strategy is to recast the problem in terms of metric properties of codes, and then to use our results on the guessing code. A $q$-ary code $C$ of length $n$ is a subset of $[q]^n$. The \textbf{remoteness} and the \textbf{covering radius} of $C$ are respectively defined as \cite{MS77, CHLL97, CG12}
\begin{align*}
	\rem(C) &:= \min_{y \in [q]^n} \max_{c \in C} \dH(c,y)\\
	\cov(C) &:= \max_{y \in [q]^n} \min_{c \in C} \dH(c,y).
\end{align*}

\begin{lemma} \label{lem:i=remoteness}
For any $D$ and any $q$,
\begin{align*}
	i(D,q) &= \max_{f \in F(D,q)} \cov(C_f),\\
	s(D,q) &= \max_{f \in F(D,q)} n - \rem(C_f).
\end{align*}
The same results hold in the affine case.
\end{lemma}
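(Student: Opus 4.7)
The approach is to exploit the fact that translation by a vector $y \in [q]^n$ preserves $F(D,q)$ and acts on the guessing code. For any $f \in F(D,q)$ and any $y \in [q]^n$, the FDS $g := f - y$ has the same interaction graph as $f$, so $g \in F(D,q)$, and clearly $C_g = C_f - y$. Rewriting $\dH(x, g(x)) = \wH(f(x) - x - y) = \dH(f(x) - x, y)$ and then minimizing or maximizing over $x \in [q]^n$ gives the two bridge identities
\begin{align*}
    i(f - y) &= \min_{c \in C_f} \dH(c, y), & s(f - y) &= n - \max_{c \in C_f} \dH(c, y).
\end{align*}

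Taking a further maximum over $y$ in the first identity yields $\max_y i(f - y) = \cov(C_f)$, and since every $f - y$ lies in $F(D,q)$ this gives $\max_f \cov(C_f) \le i(D,q)$. For the reverse inequality, setting $y = 0$ in the identity gives $i(f) = \min_{c \in C_f} \wH(c) \le \cov(C_f)$ for every $f$, hence $i(D,q) \le \max_f \cov(C_f)$. The stability argument is perfectly symmetric: taking the maximum over $y$ in the second identity gives $\max_y s(f - y) = n - \rem(C_f)$ and therefore $\max_f [n - \rem(C_f)] \le s(D,q)$, while the $y = 0$ case gives $s(f) = n - \max_{c \in C_f} \wH(c) \le n - \rem(C_f)$, yielding the opposite inequality.

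For the affine version, the only thing to verify is that translations preserve affineness: if $f(x) = xM + z$ is affine then $f(x) - y = xM + (z - y)$ is again affine, so the map $f \mapsto f - y$ sends affine elements of $F(D,q)$ to affine elements of $F(D,q)$, and the entire argument goes through verbatim with all maxima restricted to the affine class. Incidentally, this is also the reason the statement is not made for linear FDSs: translating a linear $f$ by a nonzero $y$ in general produces only an affine (non-linear) map. The proof is essentially a one-line unpacking of definitions once the translation identity is recorded; I do not expect any genuine obstacle beyond keeping the direction of the inequalities straight.
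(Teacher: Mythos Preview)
Your proof is correct and follows essentially the same approach as the paper: both hinge on the observation that translations $f \mapsto f - y$ preserve $F(D,q)$ and yield the bridge identity $i(f-y) = \min_{c \in C_f} \dH(c,y)$, after which maximizing over $y$ produces $\cov(C_f)$. The only cosmetic difference is that the paper first partitions $F(D,q)$ into translation classes $\{\phi - y : y \in [q]^n\}$ with $\phi(0)=0$, computes $i(D,q) = \max_{\phi} \cov(C_\phi)$, and then invokes translation-invariance of the covering radius to replace the normalized $\phi$'s by arbitrary $f$'s; you instead establish the two inequalities directly without normalizing, which is arguably a bit cleaner.
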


\begin{proof}
We only prove the first equality, the second one being very similarly proven.

Firstly, we can express $f$ as $f = \phi + f(0)$, where $\phi(0) = 0$ and $G(f) = G(\phi)$ (where $0$ denotes the all-zero vector of length $n$). Thus, for any graph $D$, the set $F(D,q)$ can be partitioned into classes of the form $\{\phi - y: y \in [q]^n\}$. Then
\begin{align*}
	i(D,q) &= \max_{\phi \in F(D,q), \phi(0) = 0} \max_{y \in [q]^n} \min_{x \in [q]^n} \dH(x, \phi(x) - y)\\
	&= \max_{\phi \in F(D,q),  \phi(0) = 0} \max_{y \in [q]^n} \min_{x \in [q]^n} \dH(\phi(x) - x, y)\\
	&= \max_{\phi \in F(D,q), \phi(0) = 0} \max_{y \in [q]^n} \min_{c \in C_\phi} \dH(c, y)\\
	&= \max_{\phi \in F(D,q), \phi(0) = 0} \cov(C_\phi).
\end{align*}
Finally, since $C_f = C_{\phi + f(0)} = C_\phi + f(0)$, we have $\cov(C_f) = \cov(C_\phi)$, which concludes the proof.
\end{proof}

If $C_f$ is small, it has a high covering radius, yielding high instability; it also has low remoteness, thus yielding high stability as well.

\begin{lemma} \label{lem:i_v_l}
For any $D$ and $q$,
\begin{align*}
	\log_q \VH(q, n, i(D,q)) &\ge n - l(D,q),\\
	\log_q (q^n - \VH(q, n, n - s(D,q) - 1)) &\ge n - l(D,q).
\end{align*}
The same results hold in the affine case.
\end{lemma}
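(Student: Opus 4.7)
The plan is to combine the metric characterizations of $i(D,q)$ and $s(D,q)$ given by Lemma \ref{lem:i=remoteness} with the classical covering bounds relating the size of a code to its covering radius, and a dual ``anti-ball'' covering bound coming from the remoteness. Throughout, fix an FDS $f^* \in F(D,q)$ attaining the guessing dimension, so that $|C_{f^*}| = q^{l(D,q)}$; both inequalities will fall out of comparing $C_{f^*}$ to the appropriate radius extracted from $i(D,q)$ or $s(D,q)$.

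For the first inequality I would invoke the standard covering bound. By definition of covering radius, the Hamming balls of radius $\cov(C_{f^*})$ centered at the codewords of $C_{f^*}$ cover $[q]^n$, so
$$
|C_{f^*}| \cdot \VH(q, n, \cov(C_{f^*})) \ge q^n.
$$
Lemma \ref{lem:i=remoteness} gives $\cov(C_{f^*}) \le i(D,q)$, and since $\VH$ is non-decreasing in its radius argument, this yields $\VH(q, n, i(D,q)) \ge q^{n - l(D,q)}$, which is the desired bound after taking $\log_q$.

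For the second inequality I need the dual picture associated with remoteness. By definition of $\rem(C_{f^*})$, for every $y \in [q]^n$ there exists $c \in C_{f^*}$ with $\dH(c,y) \ge \rem(C_{f^*})$, so the complements of radius-$(\rem(C_{f^*})-1)$ balls around codewords cover $[q]^n$, producing
$$
|C_{f^*}| \cdot \bigl( q^n - \VH(q, n, \rem(C_{f^*}) - 1) \bigr) \ge q^n.
$$
Lemma \ref{lem:i=remoteness} (in its stability form) gives $\rem(C_{f^*}) \ge n - s(D,q)$, and since $q^n - \VH(q,n,\cdot)$ is non-increasing, we get $q^n - \VH(q,n, n - s(D,q) - 1) \ge q^{n - l(D,q)}$, which is the claimed bound.

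The affine case is literally the same argument: Lemma \ref{lem:i=remoteness} is already stated in the affine setting, and $l_{\affine}(D,q)$ is realized by an affine $f^*$, so both covering arguments run verbatim inside the class of affine FDSs. The only step that is not completely routine is noticing the ``anti-ball covering'' property built into the remoteness; once that dual version of the covering bound is made explicit, there is no real obstacle and the rest is bookkeeping.
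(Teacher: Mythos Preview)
Your proposal is correct and follows exactly the paper's approach: apply the sphere-covering bound and the remoteness (``anti-ball'') bound to the guessing code $C_{f^*}$ of an $l(D,q)$-minimizer, then use Lemma~\ref{lem:i=remoteness} and monotonicity of $\VH$ to pass from $\cov(C_{f^*})$ and $\rem(C_{f^*})$ to $i(D,q)$ and $n-s(D,q)$. The only difference is that you spell out the anti-ball covering inequality from the definition of remoteness, whereas the paper simply cites it from \cite{CG12}.
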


\begin{proof}
The sphere-covering bound \cite{CHLL97} states that for any code $C$ of covering radius $\rho$,
$|C| \VH(q,n,\rho) \ge q^n.$ Moreover, if $C$ has remoteness $r$, then \cite{CG12} $|C| (q^n - \VH(q,n,r - 1) ) \ge q^n.$
The results then follow from applying these bounds to $C = C_f$ of cardinality $q^{l(D,q)}$.
\end{proof}

\begin{proof}[Proof of Theorem \ref{th:high_i}]
Let $H_2(p) = - p \log_2 p - (1-p) \log_2 (1-p)$ denote the binary entropy for $0 \le p \le 1$. Denote $q = 2^{\lceil - \log_2 \epsilon \rceil} > \lceil \epsilon^{-1} \rceil$ and $D = \vec{C}_\gamma^k$, where $k$ is chosen such that
$$
	\tau(D) > n \max \left\{ H_2 \left( \frac{1 - \epsilon}{2}  \right), 1 - \epsilon, \log_q (q-1) \right\}.
$$
By Theorem \ref{th:g+l=n}, $l_{\affine}(D,2) = l_{\affine}(D,q) = \tau(D)$.

Firstly, by \cite[Chapter 10, Corollary 9]{MS77}, we obtain
$$
	H_2 \left( \frac{i(D,2)}{n} \right) \ge \log_2 \VH(2,n,i(D,2)) \ge \tau(D) > H_2 \left( \frac{1- \epsilon}{2}  \right),
$$
and hence $i(D,2) > (1 - \epsilon) \frac{n}{2}$. Secondly, $i(D) = \tau(D) = n(1-\epsilon)$. Thirdly, suppose $s(D,q) = 0$, then
$$
	\log_q \left(q^n - \VH(q,n,n - s(D,q) - 1) \right) = \log_q (q-1)^n < \tau(D) = n - l(D,q),
$$
which violates Lemma \ref{lem:i_v_l}.
\end{proof}

\subsection{Additional properties}

Firstly, note that $i_{\linear}(D,q) = 0$, since any linear FDS fixes the all-zero vector. Also, from our preliminary results on the stability and instability, we have $i_{\affine}(D,2) = s_{\affine}(D,2)$ and $i_{\affine}(D,q) \ge \nu(D)$.

We now prove that the stability of a linear FDS is severely limited by its interaction graph.

\begin{proposition} \label{stability_linear}
For any linear FDS $f$,
$$
	s(f) \le n - \Delta_{\mathrm{out}}(G(f)) - 1.
$$
\end{proposition}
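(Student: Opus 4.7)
The plan is to exhibit a single input $x \in [q]^n$ for which $d_\mathrm{H}(x, f(x)) \ge \Delta_{\mathrm{out}}(G(f)) + 1$; this instantly gives $s(f) = \min_x n - d_\mathrm{H}(x,f(x)) \le n - \Delta_{\mathrm{out}}(G(f)) - 1$. I would set up notation by writing $f(x) = xM$ for the associated matrix $M \in \mathrm{GF}(q)^{n \times n}$. Because $G(f)$ is loopless (being a subgraph of the ambient loopless digraph the paper works with), the diagonal of $M$ vanishes, and for each $v$ the out-degree of $v$ in $G(f)$ equals the Hamming weight of the $v$-th row of $M$, counted over positions other than $v$ itself.

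The key step is to pick a vertex $u$ achieving $\Delta := \Delta_{\mathrm{out}}(G(f))$ and evaluate $f$ on the standard basis vector $e_u$. By linearity, $f(e_u) - e_u = e_u(M - I)$ is simply the $u$-th row of $M - I$. Its entry at position $u$ is $M_{u,u} - 1 = -1 \ne 0$, while its entry at each $v \ne u$ is $M_{u,v}$, which is nonzero at exactly the $\Delta$ out-neighbours of $u$. Summing these contributions gives $w_\mathrm{H}(f(e_u) - e_u) = \Delta + 1$, which is the desired bound on $d_\mathrm{H}(e_u, f(e_u))$.

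There is no real obstacle here, but one subtlety worth flagging: the argument relies crucially on $M_{u,u} = 0$, since otherwise the diagonal contribution $M_{u,u} - 1$ could cancel (e.g.\ when $M_{u,u} = 1$) and the row of $M - I$ could have weight only $\Delta - 1$. The looplessness of $G(f)$, inherited from the ambient graph, is exactly the hypothesis that rules this out, so the proof reduces to the one-line computation above.
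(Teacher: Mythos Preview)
Your proof is correct and is essentially the same argument as the paper's: both pick a vertex $u$ of maximum out-degree and observe that the $u$-th row of $M-I$ lies in the guessing code $C_f$ and has Hamming weight $\Delta_{\mathrm{out}}(G(f))+1$, which forces $s(f)=n-\max_{c\in C_f}\wH(c)\le n-\Delta_{\mathrm{out}}(G(f))-1$. Your remark that looplessness is needed to ensure $M_{u,u}=0$ (so the diagonal entry of $M-I$ contributes) is exactly the point the paper encodes by saying the support of $M-I$ is $I_n+A_{G(f)}$.
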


\begin{proof}
We have $f(x) - x = xM$, where $M$ has support $I_n + A_{G(f)}$. Therefore,
$$
	s(f) = n - \max_{x\in[q]^n} \dH(x,f(x)) = n - \max_{c \in C_f} \wH(c).
$$
Since the rows of $M$ are codewords of $C_f$, the maximum weight is at least $\Delta_{\mathrm{out}}(G(f)) + 1$. 
\end{proof}

This bound is trivially achieved if $\Delta_{\mathrm{out}}(G(f)) = n-1$, in which case $s(f) = 0$. More interestingly, it is also achieved for the graphs constructed from the cyclic simplex codes ($n = 2^r-1$, $\Delta = 2^{r-1} - 1$, $s_{\linear}(D,2) = 2^{r-1} - 1$). See \cite{GR11} for how to construct a digraph from a cyclic code. In particular, for $r=3$ we obtain another example of an oriented graph $D$ with $i(D,2) = \lfloor n/2 \rfloor$, namely the Paley tournament on seven vertices displayed on Figure \ref{fig:P7}, the first example being the directed triangle.

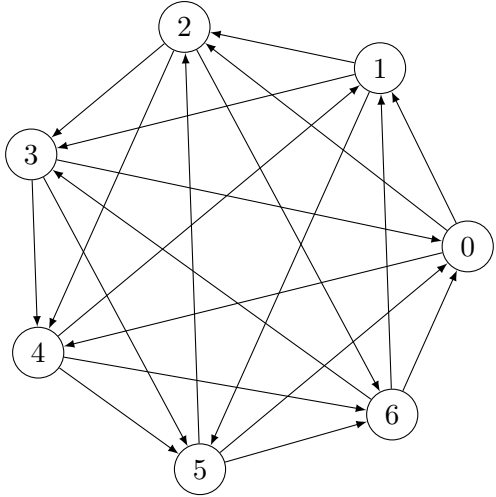
\begin{figure}
\centering
\begin{tikzpicture}
\tikzstyle{every node}=[draw, shape=circle];

\foreach \x in {0,...,6}{
	\node (\x) at (52*\x:3) {\x};
}

	\draw[-latex] (0) -- (1);
	\draw[-latex] (0) -- (2);
	\draw[-latex] (0) -- (4);
	
	\draw[-latex] (1) -- (2);
	\draw[-latex] (1) -- (3);
	\draw[-latex] (1) -- (5);

	\draw[-latex] (2) -- (3);
	\draw[-latex] (2) -- (4);
	\draw[-latex] (2) -- (6);
	
	\draw[-latex] (3) -- (4);
	\draw[-latex] (3) -- (5);
	\draw[-latex] (3) -- (0);

	\draw[-latex] (4) -- (5);
	\draw[-latex] (4) -- (6);
	\draw[-latex] (4) -- (1);
	
	\draw[-latex] (5) -- (6);
	\draw[-latex] (5) -- (0);
	\draw[-latex] (5) -- (2);

	\draw[-latex] (6) -- (0);
	\draw[-latex] (6) -- (1);
	\draw[-latex] (6) -- (3);
\end{tikzpicture}
\caption{The Paley tournament on seven vertices with binary instability $3$.}
\label{fig:P7}
\end{figure}

Although we do not know whether the affine instability always reaches the feedback upper bound, we can prove that it always exceeds the linear guessing number.

\begin{theorem} \label{th:i_affine}
For any $D$, $i_{\affine}(D) := \sup_{q \ge 2} i_{\affine}(D,q) \ge g_{\linear}(D)$.
\end{theorem}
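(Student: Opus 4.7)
The plan is to combine the linear equivalence from Theorem \ref{th:g+l=n} with the sphere-covering consequence of Lemma \ref{lem:i_v_l}, and then drive $q$ to infinity via field extensions while keeping the linear guessing number large.

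First, I would apply the affine version of Lemma \ref{lem:i_v_l} together with the identity $l_{\affine}(D,q) = n - g_{\linear}(D,q)$ from Theorem \ref{th:g+l=n} to obtain, for every prime power $q$,
$$\VH(q, n, i_{\affine}(D,q)) \ge q^{g_{\linear}(D,q)}.$$
Bounding the volume crudely by $\VH(q,n,t) \le (t+1)(nq)^t$ and taking $\log_q$, this gives a quantitative lower bound of the form
$$i_{\affine}(D,q) \ge \frac{g_{\linear}(D,q) - O(\log_q n)}{1 + \log_q n},$$
so that $i_{\affine}(D,q) \ge g_{\linear}(D,q) - o(1)$ as $q \to \infty$ with $n$ fixed.

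Second, I would establish the monotonicity $g_{\linear}(D, q_0^r) \ge g_{\linear}(D, q_0)$ along any tower of field extensions. Indeed, if $f(x) = xM$ is a linear FDS in $F(D, q_0)$ with $M \in \GF(q_0)^{n \times n}$, then the same matrix defines a linear FDS in $F(D, q_0^r)$ via the embedding $\GF(q_0) \hookrightarrow \GF(q_0^r)$; since $\dim \ker(I - M)$ is a rank invariant that does not change under field extension, the guessing number read as a dimension is preserved.

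Third, given $\epsilon > 0$, I would choose a prime power $q_0$ with $g_{\linear}(D, q_0) > g_{\linear}(D) - \epsilon$ (which exists by definition of the supremum), then apply Step~1 along the sequence $q = q_0^r$ as $r \to \infty$. Since $\log_{q_0^r} n = (\log n)/(r\log q_0) \to 0$ and $g_{\linear}(D, q_0^r) \ge g_{\linear}(D, q_0)$ by Step~2, the right-hand side of the bound tends to at least $g_{\linear}(D) - \epsilon$. Hence $i_{\affine}(D) \ge \sup_r i_{\affine}(D, q_0^r) \ge g_{\linear}(D) - \epsilon$, and letting $\epsilon \to 0$ concludes the proof. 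The main obstacle is Step~2: the supremum in $g_{\linear}(D)$ may only be approached at specific prime powers, so one needs a mechanism to simultaneously push $\log_q n$ to zero while retaining almost all of the linear guessing number, and the field-extension argument is precisely what makes this simultaneous control possible.
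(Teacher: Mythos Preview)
Your proposal is correct and follows essentially the same route as the paper: combine the linear equivalence $l_{\affine}(D,q)=n-g_{\linear}(D,q)$ with the sphere-covering bound of Lemma~\ref{lem:i_v_l}, use monotonicity of $g_{\linear}$ under field extensions (the paper cites \cite{WCR09} for this, you prove it directly via rank invariance), and let $q\to\infty$ along a tower $q_0^r$. The only cosmetic differences are that the paper uses the cleaner volume estimate $\VH(q,n,i)\le 2^n q^i$ and exploits that $g_{\linear}(D)=\max_q g_{\linear}(D,q)$ is actually attained (being an integer bounded by $\tau(D)$), which lets it dispense with your $\epsilon$-approximation.
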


\begin{proof}
Due to \cite[Theorem 4.3]{WCR09}, it is easy to check that $g_{\linear}(D,p^m) \ge g_{\linear}(D,p)$ for any prime power $p$ and any integer $m \ge 1$. Therefore, there exists $q$ large enough so that $n \log_q 2 < \epsilon$ and $n - l_{\affine}(D,q) = g_{\linear}(D,q) = g_{\linear}(D)$. Then let $i := i_{\affine}(D,q)$; we have
$\VH(q,n,i) \le 2^n q^i$ and hence by Lemma \ref{lem:i_v_l} and Theorem \ref{th:g+l=n},
$$
	i + \epsilon > \log_q \VH(q,n,i) \ge n - l_{\affine}(D,q) = g_{\linear}(D).
$$
\end{proof}

\section{Acknowledgment}

The author would like to thank Alonso Castillo-Ramirez and Adrien Richard for stimulating discussions.


\providecommand{\bysame}{\leavevmode\hbox to3em{\hrulefill}\thinspace}
\providecommand{\MR}{\relax\ifhmode\unskip\space\fi MR }
\providecommand{\MRhref}[2]{%
  \href{http://www.ams.org/mathscinet-getitem?mr=#1}{#2}
}
\providecommand{\href}[2]{#2}

\end{document}